\newcommand\version{January 18, 2012}
\newtheorem{theorem}{Theorem}
\newtheorem{proposition}[theorem]{Proposition}
\newtheorem{lemma}[theorem]{Lemma}
\theoremstyle{definition}
\theoremstyle{remark}
\numberwithin{equation}{section}
\newcommand{\C}{\mathbb{C}}
\renewcommand{\epsilon}{\varepsilon}
\renewcommand{\phi}{\varphi}
\newcommand{\R}{\mathbb{R}}
\newcommand{\Sph}{\mathbb{S}}
\DeclareMathOperator{\ran}{ran}
\DeclareMathOperator{\re}{Re}
\DeclareMathOperator{\spa}{span}
\begin{document}

\title[Symmetry of bipolaron bound states --- \version]{Symmetry of bipolaron bound states\\ for small Coulomb repulsion}

\author[R. L. Frank]{Rupert L. Frank}
\address{Rupert L. Frank, Department of Mathematics, Princeton University, Princeton, NJ 08544, USA}
\email{rlfrank@math.princeton.edu}

\author[E. H. Lieb]{Elliott H. Lieb}
 \address{E.H. Lieb, Departments of Mathematics and Physics, Princeton University, P.O. Box 708, Princeton, NJ 08544, USA} \email{lieb@princeton.edu}

\author[R. Seiringer]{Robert Seiringer}
 \address{R. Seiringer, Department of Mathematics and Statistics, McGill University, 805 Sherbrooke Street West, Montreal, QC H3A 2K6, Canada} 
\email{rseiring@math.mcgill.ca}

\begin{abstract}
We consider the bipolaron in the Pekar--Tomasevich approximation and address the question whether the ground state is spherically symmetric or not. Numerical analysis has, so far, not completely settled the question. Our contribution is to prove rigorously that the ground state remains spherical for small values of the electron-electron Coulomb repulsion.
\end{abstract}


\maketitle

\renewcommand{\thefootnote}{${}$} \footnotetext{\copyright\, 2012 by
  the authors. This paper may be reproduced, in its entirety, for
  non-commercial purposes.}

\section{Introduction}

In this paper we shall be concerned with properties of the bound state of two polarons. We do this in the context of the Pekar--Tomasevich model \cite{Pe,PeTo} of the large polaron, which, in turn, is based on Fr\"ohlich's polaron model \cite{Fr}. In the latter model two electrons interact with a quantized electric field generated by the displacement of the nuclei in a polar lattice. There are two coupling constants in Fr\"ohlich's model. The coupling to the field, $\alpha$, and the Coulomb repulsion among the electrons, denoted by $U$. Pekar's approximation is to assert that the wave function is a product of a two-particle electron wave function $\psi$ times a field function $\Phi$. After eliminating the field one is led to Pekar's energy expression for $\psi$,
\begin{equation}
 \label{eq:func}
\mathcal E_U[\psi] = \iint_{\R^3\times\R^3} \left( |\nabla_x\psi|^2 +|\nabla_y\psi|^2 + \frac{U}{|x-y|} |\psi|^2 \right) \,dx\,dy - 2 \alpha D[\rho_\psi,\rho_\psi] \,.
\end{equation}
The electron coordinates are $x$ and $y$ and the electron spin does not appear explicitly, except that $\psi$ is symmetric for the ground state, which is a singlet state. (The reason that it is a singlet is that the ground state is a positive function and must, therefore, be symmetric -- an observation that goes back to Wigner many years ago.)

In \eqref{eq:func} $\rho_\psi$ denotes the electron density, given by
$$
\rho_\psi(x) = \int_{\R^3} |\psi(x,y)|^2 \,dy + \int_{\R^3} |\psi(y,x)|^2 \,dy \,,
$$
and $D[\rho,\rho]$ is the Coulomb energy of a charge distribution $\rho$,
$$
D[\rho,\rho] = \frac12 \iint_{\R^3\times\R^3} \frac{\rho(x)\, \rho(x')}{|x-x'|} \,dx\,dx' \,.
$$
Note the minus sign in \eqref{eq:func}; the induced interaction is attractive.

The Pekar energy is
\begin{equation}
 \label{eq:energy}
e_U = \inf \{ \mathcal E_U[\psi]:\ \psi\in H^1(\R^6)\,,\, \|\psi\|=1 \} \,,
\end{equation}
with $\|\psi\|$ denoting the $L^2$ norm of $\psi$ and $H^1$ denoting the Sobolev space, i.e., square-integrable functions whose gradient is also square-integrable. This energy is more than just an approximation, for it is asymptotically exact as $\alpha$ and $U$ tend to infinity with $U/\alpha$ fixed. This is stated in \cite{MiSp}, following the technique of \cite{LiTh}; see also \cite{DoVa}.

A benchmark for the bipolaron problem is the energy of a \emph{single polaron}. It is defined in the manner of equation \eqref{eq:func}
$$
\mathcal E[\psi] = \int_{\R^3} |\nabla \psi|^2 \,dx - 2\alpha D[|\psi|^2,|\psi|^2] \,,
$$
and
\begin{equation}
 \label{eq:energysingle}
e = \inf\{ \mathcal E[\psi]: \ \psi\in H^1(\R^3)\,,\, \|\psi\| = 1 \} \,.
\end{equation}
It is known that there is a minimizing $\psi$ for this single polaron problem and that it is unique, up to translations in $\R^3$ and multiplication by a constant phase \cite{Li}. Since it is unique, it is a radial function.

By scaling we can always reduce to the case $\alpha=1/2$ and we shall do so henceforth.

There is a considerable literature on the subject of rotation invariance of the bipolaron energy minimizer, usually formulated in the language of `one-center bipolaron versus two-center bipolaron'. The analyses are all based on variational calculations. While there seems to be general agreement that the one-center bipolaron has the lower energy, it is not completely clear that a more sophisticated variational treatment will preserve rotational symmetry, especially near the value of $U$ where the bipolaron ceases to be bound. We have shown rigorously that there is such a critical $U_c$ \cite{FLST}. Numerical variational data seems to indicate that the critical $U_c$ is rather close to $1$, namely, $U_c\sim 1.15$. This tells us that binding is a delicate matter and, indeed, the existence of such a $U_c$ was an open question for some time.

 A minimizer also exists for a bipolaron provided the energy is below the energy of twice the single polaron energy \cite{Lw}. Interestingly, the bipolaron has a minimizer with finite radius at the critical value $U=U_c$ \cite{FLS}. (The same holds for a helium atom for the critical value of the nuclear charge.) The existence of a minimizer implies that the translation invariance of the Pekar minimization problem is broken. It is, therefore, not out of the question that the minimizer for two polarons might break rotational symmetry as well in order to lessen the Coulomb repulsion. 

The value of $U$ determined by physical electrostatic considerations is always $U\geq 1$. Nevertheless, one can consider the mathematical question for small, but positive $U$ and ask whether there is a possible lack of rotational invariance in that case. After all, a rotating object like the earth becomes oblate even for the smallest amount of rotation. 

In this paper we will prove that there is no breaking of rotational symmetry for small $U$. Our strategy for proving the lack of symmetry-breaking for small $U$ is based on the following consideration. For any small $U$ the minimization problem, restricted to rotation invariant functions, has an energy minimizer, as we shall prove. The question is whether there is better minimizer without rotation invariance. Since $U$ is small, both minimizers would have to be very close to the unique (up to translations) $U=0$ minimizer, and so we can discuss the existence of a symmetry breaking minimizer by means of rigorously controlled perturbation theory. It is evident that the benefit of symmetry breaking to the repulsive energy will be proportional to $-\delta^2$, where $\delta$ measures the non-sphericity, but this contribution is multiplied by $U$. On the other hand, the increase in the rest of the energy is presumably also of the form $A\delta^2$, where $A\geq 0$ is some $U$ independent number. If $A>0$ then $\delta$ wants to be zero for small $U$.

The problem with this argument is that $A$ could be zero, in which case the $-U\delta^2$ energy would always win, no matter how small $U$ is. Most of what we do in the paper, from the mathematical point of view, is to show rigorously that $A$ is not zero, and thus there is no distortion for small $U$.

Even utilizing the result that $A>0$ for the single polaron \cite{Le}, the proof given here for the bipolaron will not be a short one. One of the complexities faced in this proof is the fact that a simple translation is a distortion that costs no energy, i.e., there are zero modes. While these are physically trivial distortions it is not a trivial matter to separate their contribution, mathematically, from the relevant ones.

\begin{theorem}\label{main}
 There is a $U_s>0$ such that for all $U< U_s$ the minimizer of $\mathcal E_U$ is unique up to translations and multiplication by a constant phase. In particular, after a translation it is rotation invariant, that is, $\psi(\mathcal Rx,\mathcal Ry)=\psi(x,y)$ for any $x,y\in\R^3$ and any $\mathcal R\in O(3)$.
\end{theorem}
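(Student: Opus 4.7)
The plan is to treat the problem as a small perturbation, in $U$, of the Pekar--Tomasevich bipolaron functional $\mathcal E_0$ at $U=0$, exploiting strict coercivity of the second variation at the (essentially unique) minimizer of $\mathcal E_0$ on the orthogonal complement of its symmetry-induced zero modes.

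First I would establish the baseline at $U=0$: $\mathcal E_0$ admits a minimizer $\psi_0$ which, after translation, is spherically symmetric in the diagonal $O(3)$-sense, positive, exponentially decaying, and unique up to translation and constant phase. Existence, positivity and symmetry come from symmetric rearrangement together with a concentration-compactness argument, using $e_0<2e$ to rule out dichotomy and vanishing; uniqueness will emerge as a byproduct of the Hessian analysis below. For small $U>0$ the strict inequality $e_U<2e$ persists by continuity in $U$, so \cite{Lw} gives a minimizer $\psi_U$ of $\mathcal E_U$. Applying concentration-compactness to any sequence $U_n\to 0$, one obtains strong $H^1(\R^6)$-convergence $\psi_{U_n}\to\psi_0$ after suitable translation and phase adjustment.

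The core step is the Hessian analysis. Let $L$ denote the real-linear second variation of $\mathcal E_0$ at $\psi_0$, acting on perturbations tangent to the $L^2$-unit sphere. By global phase invariance together with diagonal translation and diagonal rotation invariance of $\mathcal E_0$, $L$ annihilates the finite-dimensional space $K=\spa\{i\psi_0,\,(\partial_{x^j}+\partial_{y^j})\psi_0:j=1,2,3\}$. The key assertion is that the kernel of $L$ is exactly $K$ and that $L\ge c>0$ on $K^\perp$. This is the two-particle analogue of Lenzmann's non-degeneracy theorem for the single polaron \cite{Le}. I would prove it by first showing that the essential spectrum of $L$ is strictly positive, via a Weyl sequence argument using exponential decay of $\psi_0$ and relative compactness of the Riesz-potential terms, and then ruling out further discrete zero modes. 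The most delicate point is the antisymmetric translation direction $(\partial_{x^j}-\partial_{y^j})\psi_0\in K^\perp$, which physically corresponds to pulling the two electrons apart: one must show quantitatively that the induced attraction $-D[\rho_\psi,\rho_\psi]$ couples the electrons strongly enough to make this a strict positivity direction of $L$.

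With coercivity of $L$ in hand, decompose any minimizer $\psi_U$, translated and phased to be closest to $\psi_0$, as $\psi_U=\psi_0+\eta_U$ with $\eta_U\perp K$ and $\|\eta_U\|_{H^1}\to 0$. A Taylor expansion gives
\[
\mathcal E_U[\psi_U]-\mathcal E_U[\psi_0]=\tfrac12\langle \eta_U,L\eta_U\rangle+U\,\mathcal Q[\psi_0,\eta_U]+O\bigl(U\|\eta_U\|_{H^1}^2+\|\eta_U\|_{H^1}^3\bigr),
\]
where $\mathcal Q$ collects the linear Coulomb terms. Applying the implicit function theorem to the Euler--Lagrange equation restricted to the diagonally symmetric class produces a family $\tilde\psi_U$ of symmetric critical points with $\mathcal E_U[\tilde\psi_U]\le\mathcal E_U[\psi_0]+CU$. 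Coercivity forces $\|\eta_U\|_{H^1}^2\le CU$, and the same implicit function argument on the full tangent space gives uniqueness of the critical point of $\mathcal E_U$ near $\psi_0$ modulo translation and phase. Hence $\psi_U$ agrees with $\tilde\psi_U$ up to the symmetry group, which yields both rotational invariance and uniqueness. The main obstacle is the Hessian step: proving strict positivity of $L$ on $K^\perp$, and quantifying the cost of the dangerous antisymmetric-translation distortions, is the substantive content of the paper.
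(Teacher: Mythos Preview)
Your outline is correct in spirit---perturb around $U=0$, establish strict coercivity of the Hessian modulo zero modes, conclude uniqueness---but it misses the structural observation that makes the Hessian step tractable. At $U=0$ the minimizer is not merely rotation invariant: it is a \emph{product} $\psi_0(x,y)=f(x)f(y)$, where $f$ is the single-polaron minimizer. The paper obtains this from the reformulation $\mathcal E_0[\psi]=\inf_\Phi \mathcal E_0[\psi,\Phi]$, which for fixed $\Phi$ is a separable two-body Schr\"odinger problem; uniqueness at $U=0$ then follows immediately from \cite{Li}, not as a byproduct of a six-dimensional Hessian analysis. This product structure is what drives the coercivity proof: one decomposes $L^2_{symm}(\R^6)=\mathcal H_0\oplus\mathcal H_1$ with $\mathcal H_1=\{f\}^\perp\otimes\{f\}^\perp$, observes that the nonlocal piece $X_{f\otimes f}$ \emph{vanishes} on $\mathcal H_1$ so that $H|_{\mathcal H_1}=h_x+h_y\ge 2\delta$ by the one-body spectral gap, and on $\mathcal H_0=\{f\otimes g+g\otimes f\}$ reduces the quadratic form to the one-particle Hessian $h-4x_f$, to which Lenzmann's result applies directly. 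Your proposed Weyl-sequence-plus-discrete-zero-mode scheme gives no mechanism for excluding extra zero modes in a genuinely six-dimensional nonlocal problem; the reduction to three dimensions via the tensor structure is the actual content.

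Two further points. You flag the antisymmetric translation $(\partial_{x^j}-\partial_{y^j})\psi_0$ as the most delicate direction, but it is \emph{permutation}-antisymmetric and therefore excluded once one restricts to $\psi(x,y)=\psi(y,x)$, which one may do without loss since replacing $\psi$ by $\sqrt{(|\psi(x,y)|^2+|\psi(y,x)|^2)/2}$ does not raise the energy. And the paper avoids your implicit-function detour: rather than expanding the minimizer $\psi_n$ around $\psi_0$ (which produces a linear term of order $U$), it expands $\psi_n$ around the \emph{rotation-invariant minimizer $\phi_n$ at the same $U_n$}. The Euler--Lagrange equation for $\phi_n$ kills the first-order term exactly, norm-resolvent convergence $H_n\to H$ transfers coercivity from $U=0$ to small $U_n$, and $(j_n,H_nj_n)\ge c\|j_n\|^2$ then forces $j_n=0$ directly.
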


It remains an \emph{open problem} to decide whether the ground state ceases to be rotation invariant for $U$ close to the critical value $U_c$.

The rest of this paper has two parts. In Part A, we reduce the proof of Theorem \ref{main} to a problem in second-order perturbation theory, namely the question of the distortion coefficient $A$. In Part B, we complete the proof of Theorem \ref{main} by showing that $A$ is positive and that zero modes play no important role.

\subsection*{Acknowledgments} We are grateful to Herbert Spohn for making us aware of this problem. Partial financial support from the U.S.~National Science Foundation through grants PHY-1068285 (R.F.), PHY-0965859 (E.L.) and the NSERC (R.S.) is acknowledged.


\section{Proof of Theorem \ref{main}. Part A}\label{sec:parta}

\subsection{Some preparations}\label{sec:prepar}

\emph{Step 1.}
In searching for the minimum in \eqref{eq:energy} we can confine our attention to non-negative, symmetric (i.e., $\psi(x,y)=\psi(y,x)$) functions. The reason is that we can replace any $\psi$ by
\begin{equation}
 \label{eq:symmnonneg}
\tilde\psi(x,y)= \sqrt{\frac12(|\psi(x,y)|^2 + |\psi(y,x)|^2)} \,.
\end{equation}
The potential energy terms remain the same and the kinetic energy term does not increase \cite[Thm. 7.8]{LiLo}.

\emph{Step 2.}
An important step is to reformulate the energy minimization problem in the following way: Define 
\begin{align}
\label{eq:energypot}
 \mathcal E_U[\psi,\Phi] & := \iint_{\R^3\times\R^3} \left( |\nabla_x\psi|^2 +|\nabla_y\psi|^2 -\Phi(x)|\psi|^2 -\Phi(y)|\psi|^2 +  \frac{U}{|x-y|} |\psi|^2 \right) \,dx\,dy \notag \notag \\
& \qquad + \frac1{8\pi} \int_{\R^3} |\nabla\Phi|^2 \,dx \iint_{\R^3\times\R^3} |\psi|^2 \,dx\,dy
\end{align}
for $\psi\in H^1(\R^6)$ and $\Phi\in\dot H^1(\R^3)$. Then
\begin{align}
\label{eq:delin}
\mathcal E_U[\psi,\Phi] \geq \|\psi\|^2 \mathcal E_U[\|\psi\|^{-1} \psi]
\end{align}
with equality if and only if $\Phi=|x|^{-1}* \rho_{\psi/\|\psi\|}$.

One advantages of this reformulation is that one can see immediately that there is a unique (up to translations) minimizer for the $U=0$ problem. For a given $\Phi$ we have, in this case, a Schr\"odinger minimization problem for two independent particles, so the best $\psi$ is a product, $\psi(x,y)=f(x)f(y)$, in which case $\rho_\psi= 2|f|^2$ and the optimal $\Phi$ is $2|x|^{-1} * |f|^2$. The problem therefore becomes
$$
e_0 = \inf\left\{ 2 \int_{\R^3} |\nabla f|^2 \,dx - 4 D[|f|^2,|f|^2] :\ \|f\|=1 \right\} \,,
$$
which coincides, up to a rescaling, with problem \eqref{eq:energysingle}. We conclude that $e_0= 8e$ and that the minimizing $f$ is unique up to translations and multiplication by a constant phase \cite{Li}.

Another advantage of this reformulation is that one sees that for any $U\geq 0$ an optimizer $\psi$, if it exists, is the ground state of a two-body Schr\"odinger operator. Therefore, $\psi$ is a multiple of a \emph{strictly positive} function \cite[Thm. 9.10]{LiLo}. This, in turn, implies that $\psi$ is symmetric (i.e., $\psi(x,y)=\psi(y,x)$), because otherwise replacing $\psi$ by \eqref{eq:symmnonneg} would strictly lower the kinetic energy \cite[Thm. 7.8]{LiLo}.

\emph{Step 3.}
In addition to the \emph{global} minimization problem \eqref{eq:energy} one can define the rotationally symmetric minimization problem, that is,
\begin{equation}
 \label{eq:energysymm}
e_U^{symm} = \inf \left\{ \mathcal E_U[\psi]:\, \psi\in H^1(\R^6)\,, \|\psi\|=1,\, \psi\ \text{rotation invariant with respect to}\ 0 \right\} .
\end{equation}
N.B.: From now on `symmetry' refers to `rotation symmetry' and not to symmetry in $x$ and $y$. We recall that the rotation invariance of $\psi$ means that $\psi(\mathcal Rx,\mathcal Ry)=\psi(x,y)$ for any $x,y\in\R^3$ and any $\mathcal R\in O(3)$. The density $\rho_\psi$ of such $\psi$ is, of course, radial and, by Newton's theorem, its potential $\rho_\psi*|x|^{-1}$ is a symmetric decreasing function bounded by $2|x|^{-1}$.

Of course, $e_U\leq e_U^{symm}$, and our goal in this paper is to investigate whether equality holds. We collect some properties of these energies. Both $e_U$ and $e_U^{symm}$ are non-decreasing, concave functions of $U$ (as infima of non-decreasing, linear functions). Because of Step 2, $e_0=e_0^{symm}$ for $U=0$. Moreover, simple trial function arguments show that $e_U \leq 2e$ and $e_U^{symm}\leq e$ for all $U$. Lewin \cite{Lw} has shown that the infimum $e_U$ is attained provided $e_U<2e$. In the appendix of this paper we shall prove an analogous result for the rotation invariant problem, with a different condition, however, namely, $e_U^{symm}<e$.

\begin{proposition}\label{exopt}
 If $e_U^{symm}<e$, then the infimum in \eqref{eq:energysymm} is attained.
\end{proposition}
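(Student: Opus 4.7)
The plan is a concentration-compactness argument adapted to the rotationally invariant setting. Since rotation invariance about $0$ removes translation freedom, the only way compactness of a minimizing sequence can fail is for $L^2$-mass to leak out radially onto an expanding spherical shell, and the hypothesis $e_U^{symm}<e$ is precisely the threshold that rules out the ``one particle bound, one escaping'' degeneration.

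First I fix a rotation-invariant minimizing sequence $\psi_n$ with $\|\psi_n\|=1$ and $\mathcal E_U[\psi_n]\to e_U^{symm}$. Standard estimates (Hardy--Littlewood--Sobolev on the attractive term, combined with $e_U^{symm}\le e<0$) produce an $H^1(\R^6)$-bound. By a pigeonhole argument I choose radii $R_n\to\infty$ such that $\int_{R_n-1<|x|<R_n+1}\rho_{\psi_n}\,dx\to 0$, and introduce a smooth radial cutoff $\phi_n$ equal to $1$ on $B_{R_n-1}$ and $0$ outside $B_{R_n+1}$. I decompose $\psi_n=\psi_n^b+\psi_n^m+\psi_n^o$ with $\psi_n^b=\phi_n(x)\phi_n(y)\psi_n$ (both inside), $\psi_n^o=(1-\phi_n(x))(1-\phi_n(y))\psi_n$ (both outside), and $\psi_n^m$ the symmetric mixed remainder; after passing to a subsequence the squared $L^2$-norms $a_n,b_n,c_n$ of the three pieces converge to $a,b,c$ with $a+b+c=1$.

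The central step is to show
\[
\mathcal E_U[\psi_n]\ge \mathcal E_U[\psi_n^b]+\mathcal E_U[\psi_n^m]+\mathcal E_U[\psi_n^o]+o(1)
\]
and to lower-bound each summand. The kinetic energy decouples via an IMS localization formula with $O(R_n^{-2})$ error, and the repulsive Coulomb cross terms are $O(R_n^{-1})$ by spatial separation. The attractive term is the delicate one: writing the total density as a sum of the three piece-densities and expanding $D$, all cross terms pairing a density supported inside $B_{R_n-1}$ with a radial density supported outside $B_{R_n+1}$ are $O(1/R_n)$ by Newton's theorem. The three individual bounds are then: $\mathcal E_U[\psi_n^b]\ge a_n\,e_U^{symm}$ by normalizing $\psi_n^b/\sqrt{a_n}$ and using $D\ge 0$; $\liminf\mathcal E_U[\psi_n^o]\ge 0$ because the ``both outside'' density sits on a radial shell at radius $\ge R_n$ so Newton gives $D\le c_n^2/(2R_n)\to 0$; and $\liminf\mathcal E_U[\psi_n^m]\ge b\,e$, obtained by splitting $\psi_n^m$ into its two symmetric halves, applying Hoffmann--Ostenhof in the bound coordinate to produce a radial one-body function $F_n$ with $\|F_n\|^2=b/2$, using Newton to kill the cross-interaction with the escaping piece, and invoking the single-polaron inequality $K_f-D_{f^2}\ge e$ for normalized $f$ together with the elementary observation $K-bD\ge K-D\ge e$ (valid since $D\ge 0$ and $b\le 1$).

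Combining in the limit, $e_U^{symm}\ge a\,e_U^{symm}+b\,e$, equivalently $(b+c)\,e_U^{symm}\ge b\,e$. If $b+c>0$, dividing and using $b/(b+c)\le 1$ with $e<0$ yields $e_U^{symm}\ge (b/(b+c))\,e\ge e$, contradicting $e_U^{symm}<e$. Therefore $b=c=0$ and $a=1$; the mass of $\psi_n$ concentrates inside $B_{R_n}$, and Rellich's theorem upgrades $H^1$-weak convergence to $L^2$-strong convergence. The weak limit $\psi$ then satisfies $\|\psi\|=1$ and $\mathcal E_U[\psi]\le e_U^{symm}$ by weak lower semicontinuity, so $\psi$ is the desired minimizer. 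The main obstacle is obtaining the lower bound $\liminf\mathcal E_U[\psi_n^m]\ge b\,e$ exactly: Newton must be used to make the attractive interaction see only the bound coordinate, Hoffmann--Ostenhof converts the latter to a one-body Weizs\"acker problem with mass $b/2$, and the positivity trick $K-bD\ge K-D$ is needed to match the single-polaron threshold $e$ rather than a weaker mass-cubed scaling.
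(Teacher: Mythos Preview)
Your overall strategy is reasonable and close in spirit to the paper's, but the central decoupling step
\[
\mathcal E_U[\psi_n]\ \ge\ \mathcal E_U[\psi_n^b]+\mathcal E_U[\psi_n^m]+\mathcal E_U[\psi_n^o]+o(1)
\]
does not hold as stated. The problem is the attractive term. Writing $\rho_{\psi_n^m}=\rho_{m,\mathrm{in}}+\rho_{m,\mathrm{out}}$ for the inner and outer marginals of the mixed piece, the expansion of $D[\rho_{\psi_n},\rho_{\psi_n}]$ contains, besides the inside/outside pairings you correctly kill with Newton, the cross term $2D[\rho_{\psi_n^b},\rho_{m,\mathrm{in}}]$, where \emph{both} densities live in the inner ball. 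This term is nonnegative and in general not $o(1)$; it enters the energy with a minus sign, so it drives the inequality the wrong way. In other words, you cannot subadditively split a quartic attractive term by spatial localization alone.

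The paper's proof sidesteps exactly this obstacle by first linearizing the quartic term via the auxiliary potential: one writes $\mathcal E_U[\psi_n]=\mathcal E_U[\psi_n,\Phi_n]$ with $\Phi_n=\rho_{\psi_n}\!*|x|^{-1}$, so that the functional becomes quadratic in $\psi$ for fixed $\Phi_n$, and then performs a Br\'ezis--Lieb type splitting $\psi_n=\psi+\tilde\psi_n$ (weak limit plus remainder) rather than a spatial cutoff. The key identity $\mathcal E_U[\psi,\Phi]\ge\|\psi\|^2\,\mathcal E_U[\psi/\|\psi\|]$ then yields
\[
e_U^{symm}\ \ge\ \|\psi\|^2\,e_U^{symm}+(1-\|\psi\|^2)\,\liminf_n \mathcal E_U[v_n],
\]
and Lemmas~\ref{onee}--\ref{mass} (Hoffmann--Ostenhof plus Newton for radial densities plus the single-polaron scaling bound) give $\liminf \mathcal E_U[v_n]\ge e$. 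Your Hoffmann--Ostenhof/Newton estimate for the mixed piece is essentially the content of those lemmas, so once the decoupling is repaired via the $\Phi$-trick, the rest of your argument would go through. A secondary point: even granting $a=1$, your cutoff radii satisfy $R_n\to\infty$, so ``mass stays inside $B_{R_n}$'' does not by itself yield tightness; you still need an argument (e.g.\ repeating the dichotomy at fixed $R$, or identifying $\psi_n^b$ with the weak limit) to upgrade to strong $L^2$ convergence.
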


The reason for the discrepancy between $2e$ for $e_U$ and $e$ for $e_U^{symm}$ is that spherical symmetry prevents the formation of two more or less separate polarons. In other words, the second polaron density has to be far away from the first, and in a radial shell, which makes it impossible to retain an energy $2e$ with an essentially unbound pair of polarons, both of which are spherically symmetric with respect to a common center.

Proposition \ref{exopt} shows the fact that for some values of $U$ the rotation invariant minimizer is \emph{not} the true minimizer. A rotation invariant minimizer is necessarily a critical point of the Pekar--Tomasevich functional. If it is a `false' minimum (i.e., its energy is bigger than the true ground state energy), its existence can possibly lead to computational difficulties for the true minimizer.

In the appendix we show two things about the rotation invariant minimization problem:
(1) There is a critical constant $U_c^{symm}$ with $1<U_c^{symm}\leq 4$ such that $e_U^{symm}<e$ for $U< U_c^{symm}$ and $e_U^{symm}=e$ for $U\geq U_c^{symm}$; see Proposition \ref{us}.\\
(2) The rotation invariant minimizer, if it exists, is a function of $|x|$, $|y|$ and $t=x\cdot y/|x| |y|$. We will show that for fixed $|x|$ and $|y|$ the minimizer is non-increasing as a function of $t$. That is, the two particles try to avoid each other; see Proposition \ref{rearr}.


\subsection{Beginning of the proof of Theorem \ref{main}}

After these preparations we are now ready to give the first part of the proof of Theorem \ref{main}. Our strategy is as follows: 
Let $U_n>0$ be a sequence such that $U_n\to 0$. For every sufficiently large $n$ there is a global minimizer $\psi_n$ and a rotation invariant minimizer $\phi_n$ corresponding to \eqref{eq:energy} and \eqref{eq:energysymm} with $U=U_n$. We shall prove that for all large $n$, $\phi_n$ and $\psi_n$ coincide up to a translation and a constant phase.

This clearly implies the theorem. Indeed, if the theorem were not true we could find a sequence $U_n$ tending to zero, and associated minimizers $\psi_n$ and $\tilde\psi_n$ which are not translates or multiples of each other. By what we are going to show, however, they are both translates and multiples of a rotation invariant minimizer $\phi_n$ if $n$ is large, which is a contradiction.

Thus, from now on we fix a sequence $U_n>0$ tending to zero. We will only consider $n$ sufficiently large such that there are a global minimizer $\psi_n$ and a rotation invariant minimizer $\phi_n$ corresponding to \eqref{eq:energy} and \eqref{eq:energysymm} with $U=U_n$. By Steps 1 and 2 above we may assume that $\psi_n$ and $\phi_n$ are positive and permutation symmetric (i.e., $\psi_n(x,y)=\psi_n(y,x)$ and similarly for $\phi_n$). Since the global minimization problem is translation invariant, we may translate $\psi_n$ in such a way that 
\begin{equation}
 \label{eq:overlap}
\alpha_n := \iint_{\R^3\times\R^3} \psi_n(x,y)\phi_n(x,y)\,dx\,dy 
= \max_{a\in\R^3} \iint_{\R^3\times\R^3} \psi_n(x-a,y-a)\phi_n(x,y)\,dx\,dy \,.
\end{equation}
With this normalization, our goal is to show that $\psi_n=\phi_n$ for all large $n$.

We decompose
$$
\psi_n = \alpha_n \phi_n + j_n \,,
$$
where, according to the definition of $\alpha_n$ and to the maximizing property in \eqref{eq:overlap},
\begin{equation}
 \label{eq:ortho}
\iint_{\R^3\times\R^3} j_n \phi_n \,dx\,dy = 0
\quad\text{and}\quad
\iint_{\R^3\times\R^3} j_n \, \vec e\cdot(\nabla_x+\nabla_y)\phi_n \,dx\,dy = 0
\ \text{for any}\ \vec e\in\Sph^2 \,.
\end{equation}
The derivative term comes from differentiating the last double integral in \eqref{eq:overlap} with respect to $a$.

We next claim that
\begin{equation}
 \label{eq:j}
\| j_n \|_{H^1} \to 0
\quad\text{as}\ n\to\infty \,.
\end{equation}
Indeed, both $\psi_n$ and $\phi_n$ are minimizing sequences for the $e_0$ problem. Since the minimizer for this problem is unique up to translations (see Step 2 above), the results of \cite{Lw} imply that
$$
\psi_n(\cdot - a_n, \cdot- a_n) \to f\otimes f
\quad\text{and}\quad
\phi_n(\cdot - b_n, \cdot- b_n) \to f\otimes f
\quad\text{in}\ H^1(\R^6)
$$
for some sequences $a_n$ and $b_n$ in $\R^3$. Here $f\otimes f$ is the $e_0$ minimizer with $f$ chosen spherically symmetric about the same origin that we fixed to formulate the problem \eqref{eq:energysymm}. Since the density of $\phi_n$ is spherically symmetric, one easily concludes that $b_n\to 0$ and, therefore, $\phi_n \to f\otimes f$ in $H^1$ as $n\to\infty$. This, together with the maximizing property of $\alpha_n$ in \eqref{eq:overlap}, implies also that $a_n\to 0$ and $\psi_n \to f\otimes f$ in $H^1$ as $n\to\infty$. Thus $\alpha_n\to 1$ and $j_n = \psi_n -\alpha_n \phi_n \to 0$ in $H^1$, as claimed in \eqref{eq:j}.

We now expand the energy of $\psi_n$ to second order in $\|j_n\|_{H^1}$. A simple but tedious computation shows that
\begin{equation}
 \label{eq:expansion}
\mathcal E_{U_n}[\psi_n] = \mathcal E_{U_n}[\phi_n] + (j_n, L_n j_n) + \mathcal O(\|j_n\|_{H^1}^3)
\end{equation}
with the linear operator
$$
L_n = \left( -\Delta_x - \Delta_y + \frac{U_n}{|x-y|} - \rho_{\phi_n} * \frac1{|x|} - \rho_{\phi_n} * \frac1{|y|} -\mu_n\right) - 4 X_{\phi_n} \,.
$$
Here $\rho_{\phi_n} * |x|^{-1}$ is an abbreviation for $\left(\rho_{\phi_n} * |\cdot|^{-1}\right)(x)$, we introduced  $\mu_n = \mathcal E_{U_n}[\phi_n] - D[\rho_{\phi_n},\rho_{\phi_n}]$ and $X_{\phi_n}$ is the integral operator on $L^2(\R^6)$ with the integral kernel
$$
X_{\phi_n}(x,y,x',y') = \frac{\phi_n(x,y)\phi_n(x',y')}{|x-x'|} + \frac{\phi_n(x,y)\phi_n(x',y')}{|y-y'|} \,.
$$
In order to see that there is no linear term in $j_n$ in the computation, we used the Euler--Lagrange equation for $\phi_n$, that is,
\begin{equation}
 \label{eq:el}
\left( -\Delta_x - \Delta_y + \frac{U_n}{|x-y|} - \rho_{\phi_n} * \frac1{|x|} - \rho_{\phi_n} * \frac1{|y|} \right)\phi_n = \mu_n \phi_n \,.
\end{equation}
(From the minimizing property of $\phi_n$ one obtains this equation only when integrated against rotation invariant functions but, since both sides are rotation invariant functions, it is true even when integrated against \emph{any} function.) We also used the fact that 
$$
\alpha_n = \left( 1- \|j_n\|^2 \right)^{1/2} = 1 - \frac12 \|j_n\|^2 + \mathcal O(\|j_n\|^4) \,.
$$
In deriving \eqref{eq:expansion} we applied standard estimates to bound all terms of higher than quadratic order by a constant times $\|j_n\|_{H^1}^3$.

The way forward is now clear: The quantity $A$ referred to in the introduction can be identified as $(j, L_n j)/\|j\|^2$ for $j$'s satisfying \eqref{eq:ortho}, and we need to prove that this is non-zero. We will prove that there are constants $N\geq 1$ and $c>0$ such that $(j, L_n j) \geq c\|j\|^2$ for all $n\geq N$ and all $j$ satisfying the orthogonality conditions \eqref{eq:ortho}. Then \eqref{eq:expansion} will imply that
$$
\mathcal E_{U_n}[\psi_n] \geq \mathcal E_{U_n}[\phi_n] + c \|j_n\|^2 + \mathcal O(\|j_n\|_{H^1}^3) \,.
$$
Since $\psi_n$ is, by assumption, an energy minimizer, we have necessarily $j_n=0$ for all large $n$, which means $\psi_n=\phi_n$, as we intended to prove.

Thus we are left with proving a lower bound on $(j, L_n j)$. We show this perturbatively by analyzing the $U=0$ case.


\subsection{The Hessian}

Instead of working directly with the operator $L_n$, it is more convenient to work with a closely related operator, namely the Hessian $H_n$ of $\mathcal E_{U_n}$ at $\phi_n$. That is, for any normalized, real-valued $j \in H^1(\R^6)$ we define
\begin{equation}
 \label{eq:hessian}
\left.\frac{d^2}{d\epsilon^2}\right| _{\epsilon=0} \mathcal E_{U_n} \left[ \frac{\phi_n+\epsilon j}{\sqrt{1+ 2 \epsilon (\phi_n,j) + \epsilon^2}} \right] = (j, H_n j) \,.
\end{equation}
A similar computation as before shows that
$$
H_n = L_n +  |k_n \rangle \langle\phi_n| + |\phi_n\rangle\langle k_n| + \beta_n |\phi_n\rangle\langle\phi_n| \,,
$$
where
$$
\beta_n = 4 \left( \left( \phi_n,\left( -\Delta_x - \Delta_y + \frac{U_n}{|x-y|}\right)\phi_n \right) - 3D[\rho_{\phi_n},\rho_{\phi_n}] \right)
$$
and
$$
k_n = -2 \left( \left( -\Delta_x - \Delta_y + \frac{U_n}{|x-y|}\right)\phi_n - 2 \rho_{\phi_n} * |x|^{-1} \phi_n - 2 \rho_{\phi_n} * |y|^{-1} \phi_n \right) \,.
$$
The expressions for $\beta_n$ and $k_n$ can be somewhat simplified using equation \eqref{eq:el} for $\phi_n$, but we will not need this. The only thing that is relevant for us is that $(j,H_n j) = (j,L_n j)$ if $j$ satisfies the first orthogonality condition \eqref{eq:ortho}.

We collect two facts about the operators $H_n$. First, $H_n$ (which commutes with angular momentum since $\phi_n$ is rotation invariant) is non-negative on the subspace of angular momentum zero. This follows from the minimizing property of $\phi_n$. Second, the functions $\phi_n$ and $\vec e\cdot(\nabla_x+\nabla_y)\phi_n$, $e\in\Sph^2$, are in the kernel of $H_n$. For $\phi_n$, this follows immediately from the Euler--Lagrange equation \eqref{eq:el} for $\phi_n$, and for $\vec e\cdot(\nabla_x+\nabla_y)\phi_n$, this follows by differentiating the equation \eqref{eq:el} for $\phi_n(x+t\vec e,y+t\vec e)$ (which is the same as that for $\phi_n$) with respect to $t$ at $t=0$.

The following proposition says that, for small $U_n$, $H_n$ is strictly positive away from the zero modes found above.

\begin{proposition}\label{pos}
There is a number $c>0$ and an $N\geq1$ such that for all $n\geq N$ and for all $j$ which are orthogonal to $\phi_n$ and to $\vec e\cdot(\nabla_x+\nabla_y)\phi_n$, $\vec e\in\Sph^2$, the following estimate holds:
$$
(j,H_n j) \geq c\|j\|^2 \,.
$$
Equivalently, if $P_n$ denotes the projection onto the four-dimensional space spanned by $\phi_n$ and $\vec e\cdot(\nabla_x+\nabla_y)\phi_n$, $e\in\Sph^2$, then for $n\geq N$
$$
P_n^\bot H_n P_n\bot \geq c P_n^\bot \,.
$$
\end{proposition}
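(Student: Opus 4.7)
The plan is to argue by contradiction and compactness, reducing the claim to a spectral analysis of the limit Hessian $H_0$ at $f\otimes f$ for the $U=0$ problem and invoking Lenzmann's single-polaron result \cite{Le}. Suppose the conclusion fails: along a subsequence there exist $j_n$ with $\|j_n\|=1$, $(j_n,\phi_n)=0$, $(j_n,(\nabla_x+\nabla_y)\phi_n)=0$, and $(j_n,H_n j_n)\to c_*\le 0$. (As in Step 1 of Section \ref{sec:prepar} we may and do take $j_n$ symmetric under $x\leftrightarrow y$.) Since $H_n$ differs from $-\Delta_x-\Delta_y$ by a form-bounded perturbation with constants uniform in $n$ (using $U_n\to 0$ and $\phi_n\to f\otimes f$ in $H^1$), the bound $(j_n,H_n j_n)=O(1)$ yields a uniform $H^1$ bound on $j_n$; extracting a subsequence we obtain $j_n\rightharpoonup j_\infty$ weakly in $H^1$. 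Passing to the limit in the quadratic form — the local potentials converge strongly, the $U_n/|x-y|$ term vanishes, and the integral kernel of $X_{\phi_n}$ converges sufficiently strongly — yields $(j_\infty,H_0 j_\infty)\le c_*\le 0$ together with $j_\infty\perp f\otimes f$ and $j_\infty\perp(\nabla_x+\nabla_y)(f\otimes f)$.

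The central step is to show that $H_0$ is strictly positive on the symmetric subspace orthogonal to these four functions. Let $L^{(1)}:=-\Delta-\rho_{f\otimes f}*|x|^{-1}-\mu_0/2$ be the one-body operator arising from the Euler--Lagrange equation for $f$; its ground state is $f$ at eigenvalue $0$, and since $\mu_0<0$ its essential spectrum starts at $|\mu_0|/2>0$, so there is a positive spectral gap $\lambda_1>0$ above $0$. Decompose $j_\infty=h+j_\perp$ where $h(x,y):=g(x)f(y)+f(x)g(y)$ with $g(x):=\int f(y)j_\infty(x,y)\,dy$, and $j_\perp:=j_\infty-h$. Then $(g,f)=(f\otimes f,j_\infty)=0$ and $\int f(y)j_\perp(x,y)\,dy=0$ for a.e.\ $x$; the decomposition is $L^2$-orthogonal. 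Self-adjointness of $L^{(1)}$ together with $L^{(1)}f=0$ yields $(L^{(1)}_x+L^{(1)}_y)h=(L^{(1)}g)\otimes f+f\otimes(L^{(1)}g)$, whose $L^2$ pairing with $j_\perp$ vanishes; and $X_{f\otimes f}$ annihilates $j_\perp$ because it factors through the partial pairing against $f$. Hence $(h,H_0 j_\perp)=0$ and $(j_\infty,H_0 j_\infty)=(h,H_0 h)+(j_\perp,H_0 j_\perp)$.

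On the product part, observing that the bipolaron energy evaluated on $(f+\epsilon g)\otimes(f+\epsilon g)/\|\cdot\|$ reduces to twice the appropriate single-polaron functional applied to $(f+\epsilon g)/\|f+\epsilon g\|$, a direct second-order expansion gives $(h,H_0 h)=2(g,L_{sp}g)$ where $L_{sp}$ is the single-polaron Hessian at $f$. Lenzmann's theorem \cite{Le} asserts that $L_{sp}$ is strictly positive on $(f,\nabla f)^\perp$; the constraint $(g,f)=0$ is built in, and the identity $(j_\infty,(\nabla_x+\nabla_y)(f\otimes f))=2(g,\nabla f)$ forces $(g,\nabla f)=0$. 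Thus $(h,H_0 h)\ge c_1\|h\|^2$ for some $c_1>0$. On the correlated part, $H_0 j_\perp=(L^{(1)}_x+L^{(1)}_y)j_\perp$; since $j_\perp$ is orthogonal to every $f\otimes\xi$ and every $\xi\otimes f$, the tensor-product spectral structure of $L^{(1)}_x+L^{(1)}_y$ combined with the gap $\lambda_1>0$ above the ground-state eigenvalue yields $(j_\perp,H_0 j_\perp)\ge c_2\|j_\perp\|^2$ for some $c_2>0$. Consequently $(j_\infty,H_0 j_\infty)\ge c\|j_\infty\|^2$ with $c:=\min(c_1,c_2)>0$, which together with $(j_\infty,H_0 j_\infty)\le 0$ forces $j_\infty=0$.

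It remains to exclude $j_\infty=0$, equivalently to upgrade the weak convergence to strong $L^2$ convergence of $j_n$. Using exponential decay of $\phi_n$ and its first derivatives, inherited from $f$ with constants uniform in $n$ (since $U_n\to 0$), the potential part of $H_n$ is small in the tail $|x|+|y|>R$; together with $(j_n,H_n j_n)=o(1)$ this yields $\|j_n\|_{L^2(|x|+|y|>R)}\to 0$ uniformly as $R\to\infty$, precluding mass concentration at infinity. Rellich--Kondrachov then gives $j_n\to j_\infty$ strongly in $L^2$, so $\|j_\infty\|=1$, contradicting $j_\infty=0$. The principal obstacle is the spectral analysis of $H_0$ on the product subspace, where Lenzmann's deep result on the positivity of the single-polaron distortion coefficient is essential and the orthogonality to $(\nabla_x+\nabla_y)(f\otimes f)$ must be shown to translate precisely into $g\perp\nabla f$; controlling compactness at infinity in the presence of the translational zero modes is the secondary technical hurdle.
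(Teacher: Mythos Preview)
Your proof is correct, and the heart of it --- the spectral analysis of the $U=0$ Hessian $H_0$ via the splitting into the ``product'' piece $f\otimes g+g\otimes f$ and the ``correlated'' piece in $\{f\}^\perp\otimes\{f\}^\perp$, followed by the one-polaron non-degeneracy on the first and the gap of $h$ on the second --- is exactly what the paper does in Propositions~\ref{pos0} and~\ref{pos1}.

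The genuine difference is in how you pass from $H_n$ to $H_0$. The paper simply observes that $\phi_n\to f\otimes f$ in $H^1$ forces $H_n\to H_0$ in norm resolvent sense, so the bottom of the spectrum of $P_n^\perp H_n P_n^\perp$ converges to that of $P^\perp H_0 P^\perp$, and Proposition~\ref{pos} reduces immediately to the $U=0$ statement. You instead run a contradiction/concentration-compactness argument: extract a weak limit $j_\infty$, use that escaping mass contributes at least $-\mu_0>0$ to rule out $j_\infty=0$, and then apply the $H_0$ analysis. Both routes are standard; the resolvent-convergence route is shorter and avoids having to justify the weak-limit inequality $(j_\infty,H_0 j_\infty)\le c_*$ (which, as you implicitly recognize, needs the same positivity-at-infinity input as your final step). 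Your route has the virtue of being more self-contained, at the cost of some redundancy between the ``pass to the limit'' step and the ``exclude $j_\infty=0$'' step.

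One point to sharpen: what you cite as ``Lenzmann's theorem'' --- strict positivity of the single-polaron Hessian on $\{f,\nabla f\}^\perp$ --- is not quite what \cite{Le} proves. Lenzmann identifies the kernel of the \emph{linearized} operator $L^{(1)}=h-4x_f$ as $\mathrm{span}\{\partial_i f\}$. Since $L^{(1)}$ has a negative eigenvalue, an extra argument is needed in the radial sector to show that the Hessian $H^{(1)}$ (which differs from $L^{(1)}$ by rank-one terms along $f$) has no radial zero mode besides $f$. The paper supplies this via the scaling vector $R=2f+x\cdot\nabla f$ with $L^{(1)}R=\mu f$ and $(R,f)=\tfrac12$; you should either reproduce this or cite it explicitly.
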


Note that, as remarked above, $(j,H_n j)=(j,L_n j)$ if $(\phi_n,j) = 0$ and, therefore, Proposition \ref{pos} concludes the proof of Theorem \ref{main}.\qed

\medskip

Since $\phi_n\to f\otimes f$ in $H^1(\R^6)$ as $n\to\infty$, the operators $H_n$ tend to the corresponding operator at $U=0$ in norm resolvent sense. In particular, the eigenvalues converge and, therefore, it suffices to prove Proposition \ref{pos} in the case $U=0$. This is the topic of the next section.

The reason that we have chosen to work with the operator $H_n$ instead of $L_n$ is that the operator $L_n$ has a negative eigenvalue. (This follows from the variational principle since $(\phi_n,L_n\phi_n) = -4 (\phi_n X_{\phi_n},\phi_n)<0$ by \eqref{eq:el}.) The positivity of $(j,L_n j)$ asserted in Proposition \ref{pos} therefore crucially relies on the orthogonality condition \eqref{eq:ortho}. This condition is not easy to use, however, since $\phi_n$ is not an eigenfunction of $L_n$. In contrast, it is an eigenfunction of $H_n$. We also note that $H_n$ is the operator that automatically takes care of the normalization condition, without any reference to orthogonality, and, therefore, is most directly connected to the coefficient $A$ mentioned in the introduction.


\section{Proof of Theorem \ref{main}. Part B}

As explained in Step 2 of the previous section, at $U=0$ we have $\phi(x,y)=f(x)f(y)$, where $f$ is a radial decreasing function on $\R^3$ with $\int f^2\,dx =1$ and
\begin{equation}
 \label{eq:onepart}
h f = 0\,,\qquad h=-\Delta - 2f^2 *\frac{1}{|x|} - \frac\mu2 \,.
\end{equation}
Here, $\mu=e_0-4D[f^2,f^2]=8e-4D[f^2,f^2]$. The Hessian of $\mathcal E_0$ at $f\otimes f$ is the operator on $L^2(\R^6)$ given by
$$
H = L +  |r\otimes f+ f\otimes r \rangle \langle f\otimes f| + |f\otimes f\rangle\langle r\otimes f+f \otimes r| + \beta |f\otimes f\rangle\langle f\otimes f| \,,
$$
where
$$
L = h_x + h_y - 4 X_{f\otimes f}
$$
and where $\beta$ is a constant and $r$ is a function in $L^2(\R^3)$. More precisely,
$$
\beta = 8 \left( \left( f,-\Delta f \right) - 6 D[f^2,f^2] \right)
$$
and
$$
r=-2 \left( -\Delta - 4 f^2 * |x|^{-1}\right) f \,.
$$
The main result of this section is the following $U=0$ analogue of Proposition \ref{pos}. We note that the three-dimensional space spanned by $\vec e\cdot(\nabla_x+\nabla_y)\phi$, $\vec e\in\Sph^2$, coincides with the space spanned by $f\otimes f' Y_{1,m}+ f'Y_{1,m}\otimes f$, $m=-1,0,1$, where $Y_{1,m}$ are spherical harmonics of degree one.

\begin{proposition}\label{pos0}
There is a number $C>0$ such that for all $j\in H^1(\R^6)$ with $j(x,y)=j(y,x)$ that satisfy
$$
(f\otimes f,j) = (f\otimes f' Y_{1,m}+ f'Y_{1,m}\otimes f, j) = 0
\qquad\text{for all}\ m=-1,0,1,
$$
the following estimate holds:
$$
(j,H j) \geq C\|j\|^2 \,.
$$
Equivalently, if $P$ denotes the projection onto the four-dimensional space spanned by $f\otimes f$ and $f\otimes f' Y_{1,m}+ f'Y_{1,m}\otimes f$, $m=-1,0,1$, then
$$
P^\bot H P^\bot \geq C P^\bot \,.
$$
\end{proposition}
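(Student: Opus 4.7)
The strategy is to exploit a three-block orthogonal decomposition
\[
  L^2(\R^6)_{\mathrm{sym}} \;=\; \C\,(f\otimes f)\,\oplus\,\mathcal H_1\,\oplus\,\mathcal H_2,
\]
where $\mathcal H_1 := \{u\otimes f + f\otimes u : u\in L^2(\R^3),\ (u,f)=0\}$ and $\mathcal H_2$ is the subspace of symmetric $w\in L^2(\R^6)$ that are \emph{doubly orthogonal} to $f$, i.e.\ $\int f(x)w(x,y)\,dx=0$ for a.e.\ $y$. Since $f\otimes f\in\ker H$, the remark in the text gives $(j,Hj)=(j,Lj)$ for any $j\perp f\otimes f$, so on this complement it suffices to analyze the form of $L = h_x + h_y - 4X_{f\otimes f}$. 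A direct inspection of $X_{f\otimes f}$ shows that $(\cdot,L\cdot)$ decouples as a direct sum on $\mathcal H_1\oplus\mathcal H_2$ (any $f\otimes f$ contamination introduced by $L$ drops out when taking inner products). Notably, the three translation zero modes $f\otimes\partial_i f+\partial_i f\otimes f$ lie in $\mathcal H_1$, since $f$ is radial and so $\partial_i f\perp f$.

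On $\mathcal H_2$ one has $X_{f\otimes f}|_{\mathcal H_2}=0$, because its action on $w$ involves only $\int f(y')w(x',y')\,dy'$ and its symmetric counterpart, both vanishing on $\mathcal H_2$. Hence $(w,Lw) = (w,(h_x+h_y)w)$. Since $f$ is a strictly positive ground state of $h$, zero is a simple eigenvalue of $h$ by Perron--Frobenius; moreover $\spec_{\mathrm{ess}}(h)=[-\mu/2,\infty)\subset(0,\infty)$ because $\mu = 8e - 4D[f^2,f^2]<0$. So $h$ has a spectral gap $\gamma>0$ above $0$, and double orthogonality forces $(w,Lw)\geq 2\gamma\|w\|^2$. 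On $\mathcal H_1\cong\{f\}^\perp$ (via $u\mapsto u\otimes f + f\otimes u$), the identity $\int f(y')\psi(x',y')\,dy'=u(x')$ yields $(\psi,X_{f\otimes f}\psi)=4\,D[fu,fu]$ for $\psi = u\otimes f + f\otimes u$, and thus
\[
  (\psi,L\psi) \;=\; 2(u,hu) - 16\,D[fu,fu] \;=:\; Q[u],\qquad u\perp f.
\]
This quadratic form $Q$ is, up to a constant factor arising from the explicit rescaling that relates the bipolaron-at-$U=0$ reduction $f$ to the canonical single-polaron minimizer, the constrained Hessian of the single-polaron energy at its radial ground state. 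Lenzmann's non-degeneracy theorem \cite{Le} then gives $Q\geq 0$ on $\{f\}^\perp$ with kernel exactly the three-dimensional translation space $\spa\{\partial_i f : i=1,2,3\}$, which under the isometry $\mathcal H_1\cong\{f\}^\perp$ is precisely the translation subspace of $H$.

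To promote non-negativity to a uniform strict lower bound, the quadratic form $u\mapsto 16\,D[fu,fu]$ is realized by a bounded operator with integral kernel $8\,f(x)f(y)/|x-y|$, which is square-integrable on $\R^3\times\R^3$ (by Hardy's inequality combined with $f\in H^1$) and therefore Hilbert--Schmidt. Consequently the operator $\mathcal Q$ representing $Q$ differs from $2h$ by a compact perturbation, so $\spec_{\mathrm{ess}}(\mathcal Q)=[-\mu,\infty)\subset(0,\infty)$; only finitely many eigenvalues lie below this threshold, and by Lenzmann the eigenvalue zero has multiplicity exactly three. Hence $\mathcal Q\geq c>0$ on the orthogonal complement of the translations in $\{f\}^\perp$, which combined with the $\mathcal H_2$ estimate yields the desired bound $P^\perp H P^\perp\geq C P^\perp$. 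The main obstacle is the $\mathcal H_1$ analysis: relating the restriction of the full Hessian $H$ to $\mathcal H_1$ to the well-studied single-polaron Hessian requires careful bookkeeping of the rank-one modifications in $H$ beyond $L$ and of the rescaling linking $f$ to the canonical single polaron, so that Lenzmann's theorem \cite{Le} can be invoked in the correct form.
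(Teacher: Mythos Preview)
Your overall strategy --- the decomposition into $\C(f\otimes f)\oplus\mathcal H_1\oplus\mathcal H_2$, the observation that $X_{f\otimes f}$ vanishes on $\mathcal H_2$ so that $L=h_x+h_y$ there, and the reduction of the $\mathcal H_1$ block to the one-particle form $Q[u]=2(u,L^{(1)}u)$ --- is exactly the paper's approach, and those steps are carried out correctly.

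The gap is in the final step on $\mathcal H_1$, specifically in the radial ($l=0$) sector of the one-particle problem. You invoke Lenzmann's result $\ker L^{(1)}=\spa\{\partial_i f\}$ together with compactness to conclude that $\mathcal Q=2L^{(1)}\geq c>0$ on $\{f\}^\perp\cap(\text{translations})^\perp$. This inference is not valid: $L^{(1)}$ has exactly one negative eigenvalue, whose (radial) eigenvector is \emph{not} $f$, since $f$ is not an eigenfunction of $L^{(1)}$ (indeed $L^{(1)}f=-4x_f f$). Hence the compression $\Pi_{\{f\}^\perp}L^{(1)}\Pi_{\{f\}^\perp}$ could, a priori, acquire a new radial zero mode not present in $\ker L^{(1)}$. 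A simple $2\times 2$ model shows this can happen: take $A=\mathrm{diag}(-1,1)$ in the basis $e_1,e_2$ and $f=e_1+e_2$; then $A\geq 0$ on $\{f\}^\perp$, $\ker A=\{0\}$, yet $A$ vanishes on $\{f\}^\perp=\spa\{e_1-e_2\}$.

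The paper closes this gap (in its Proposition on the one-particle Hessian) by constructing the scaling mode $R=2f+x\cdot\nabla f$, which satisfies $L^{(1)}R=\mu f$ and $(R,f)=\tfrac12\neq 0$. If $v$ is radial, $v\perp f$, and $(v,L^{(1)}v)=0$, then nonnegativity on $\{f\}^\perp$ forces $L^{(1)}v\in\spa\{f\}$, say $L^{(1)}v=\lambda f$; setting $\tilde v=v-\lambda\mu^{-1}R$ gives $L^{(1)}\tilde v=0$, so $\tilde v=0$ by Lenzmann (no radial kernel), and then $(f,\tilde v)=-\lambda\mu^{-1}(f,R)$ forces $\lambda=0$, hence $v=0$. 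This is the missing ingredient; once you insert it, your argument matches the paper's and is complete.
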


We shall deduce this from a result about the Hessian of the one-particle functional, which we discuss in the following subsection.


\subsection{The one-particle Hessian}

We recall that the function $f$ minimizes the one-polaron functional
\begin{equation}
 \label{eq:onepol}
\int_{\R^3} |\nabla \psi|^2 \,dx - 2 D[|\psi|^2,|\psi|^2]
\end{equation}
and that the corresponding Euler--Lagrange equation reads $hf=0$ with $h$ from \eqref{eq:onepart}. Moreover, the Hessian of the above one-polaron functional at $f$ reads
$$
H^{(1)} = L^{(1)} + |r\rangle\langle f| + |f\rangle\langle r| + \gamma |f\rangle\langle f|
$$
with
$$
L^{(1)} = h - 4 x_f \,.
$$
Here $x_f$ is the operator on $L^2(\R^3)$ with the integral kernel $f(x)|x-x'|^{-1} f(x')$ and $r$ is the same function as in the expression of the two-particle operator $H$. The following theorem is the one-particle analogue of Proposition \ref{pos0}. Its proof relies heavily on previous work of Lenzmann \cite{Le}.

\begin{proposition}\label{pos1}
There is a number $C'>0$ such that for all $j\in H^1(\R^3)$ that satisfy
$$
(f,j) = (f' Y_{1,m}, j) = 0
\qquad\text{for all}\ m=-1,0,1,
$$
the following estimate holds:
$$
(j, H^{(1)} j) \geq C' \|j\|^2 \,.
$$
\end{proposition}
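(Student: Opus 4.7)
The first orthogonality condition $(f,j)=0$ removes the rank-two correction in $H^{(1)}$, so $(j,H^{(1)} j)=(j,L^{(1)} j)$, and the task reduces to showing $(j,L^{(1)} j)\geq C'\|j\|^2$ on the codimension-$4$ closed subspace $W:=\{f, f'Y_{1,m}\,(m=-1,0,1)\}^\perp$ of $L^2(\R^3)$.

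The plan is to combine three ingredients. First, since $f$ is a minimizer of the one-polaron functional on the unit sphere, the standard second-order necessary condition gives $(j,H^{(1)} j)\geq 0$ for every $j\perp f$; this translates to $(j,L^{(1)} j)\geq 0$ for every $j\in W$. Second, differentiating the Euler--Lagrange equation $hf=0$ along a translation $f\mapsto f(\cdot+t\vec e_i)$ and using that $h$ depends on $f$ through $f^2*|x|^{-1}$ gives $L^{(1)}\partial_i f=0$; the three functions $\partial_i f=f'(|x|)\,\hat x_i$ span the same three-dimensional subspace as $\{f'Y_{1,m}\}_{m=-1,0,1}$, and $f$ is orthogonal to this subspace by parity, so $W$ has codimension $4$ as claimed and $W\cap\ker L^{(1)}=\{0\}$ by construction. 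Third---the deep input---the nondegeneracy theorem of Lenzmann \cite{Le} states that $\ker L^{(1)}=\mathrm{span}\{\partial_i f\}_{i=1}^{3}$, so no further zero mode of $L^{(1)}$ can lurk in $W$.

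The remainder is a soft spectral-gap argument. The operator $x_f$ is compact (in fact Hilbert--Schmidt, using the exponential decay of $f$), so $L^{(1)}=h-4x_f$ has the same essential spectrum $[-\mu/2,\infty)$ as $h$. The threshold is strictly positive because $\mu=8e-4D[f^2,f^2]<0$ (using $e<0$ and $D[f^2,f^2]>0$). Consequently $L^{(1)}$ has only finitely many eigenvalues below $-\mu/2$, each of finite multiplicity. On $W$, the quadratic form of $L^{(1)}$ is non-negative by the first ingredient and has no zero eigenvalue by the second and third; the essential spectrum remains $\geq -\mu/2>0$, and the finitely many point eigenvalues of $L^{(1)}|_W$ below $-\mu/2$ are therefore strictly positive. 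Hence $\inf_{j\in W,\|j\|=1}(j,L^{(1)} j)>0$, which is the desired constant $C'$.

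The main obstacle is the nondegeneracy theorem of Lenzmann; the rest of the argument is either algebraic (the translation zero modes), geometric (the minimality-based inequality $L^{(1)}\geq 0$ on $f^\perp$), or standard functional-analytic (the essential-spectrum computation). Lenzmann's proof proceeds by decomposing $L^{(1)}$ into angular-momentum sectors and applying Sturm--Liouville oscillation analysis to the resulting radial operators to rule out any additional zero mode; this is the one place where genuinely hard analysis is unavoidable.
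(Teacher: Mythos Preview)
Your argument is correct in the sectors of angular momentum $l\geq 1$, but it has a genuine gap in the radial sector $l=0$. The subspace $W=\{f,\partial_i f\}^\perp$ is \emph{not} invariant under $L^{(1)}$, because $f$ is not an eigenfunction of $L^{(1)}$ (indeed $(f,L^{(1)}f)=-4(f,x_f f)<0$). Therefore knowing that $\ker L^{(1)}\cap W=\{0\}$ --- which is what Lenzmann gives you --- does \emph{not} imply that the compressed operator $P_W L^{(1)} P_W$ has no zero eigenvalue on $W$. Concretely: if the infimum of $(j,L^{(1)} j)/\|j\|^2$ over radial $j\perp f$ were zero, it would be attained (by your essential-spectrum argument) at some radial $j$ satisfying the constrained Euler--Lagrange equation $L^{(1)} j = c f$ for a Lagrange multiplier $c$, not $L^{(1)} j=0$. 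Lenzmann's theorem says nothing about solutions with $c\neq 0$, so your third ingredient does not close the argument.

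The paper fills exactly this gap by bringing in the \emph{scaling} direction in addition to translations. Differentiating the rescaled family $f_\beta(x)=\beta^2 f(\beta x)$ at $\beta=1$ produces $R=2f+x\cdot\nabla f$ with $L^{(1)} R=\mu f$ and $(R,f)=\tfrac12$. Since Lenzmann gives $\ker L^{(1)}\cap\{\text{radial}\}=\{0\}$, any radial solution of $L^{(1)} j=c f$ must equal $(c/\mu)R$; the constraint $(j,f)=0$ then forces $c=0$, hence $j=0$. This is the missing step that upgrades ``$\geq 0$'' to ``$\geq C'>0$'' in the radial sector, and it is not a formality: it is the place where the nontrivial overlap $(R,f)\neq 0$ enters.
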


Similarly as in the discussion before Proposition \ref{pos}, we note that $f$ and $f' Y_{1,m}$ are zero modes of $H^{(1)}$. Thus the proposition says that there are no other zero modes.

\begin{proof}
 We first note that $H^{(1)}$ is non-negative, since $f$ is a minimizer of the one-polaron problem. We now argue that $L^{(1)}$ has exactly one negative eigenvalue. Indeed, since $(j,L^{(1)}j)=(j,H^{(1)}j)$ if $(f,j)=0$ and since $H^{(1)}\geq 0$, the variational principle implies that $L^{(1)}$ has at most one negative eigenvalue. On the other hand, the Euler--Lagrange equation for $f$ implies that $(f,L^{(1)}f)= -4 (f,x_f f)<0$, which, again by the variational principle, means that $L^{(1)}$ has at least one negative eigenvalue. This proves the claim.

We next recall Lenzmann's result \cite{Le}, which states that
\begin{equation}
 \label{eq:kernel}
\ker L^{(1)} = \spa\{ f' Y_{1,m}:\ m=-1,0,1 \} \,.
\end{equation}
We also note that $L^{(1)}$ commutes with angular momentum. Moreover, its essential spectrum starts at $-\mu/2>0$ and, therefore, \eqref{eq:kernel} implies that there is a constant $c'>0$ such that
$$
(j, H^{(1)} j) = (j, L^{(1)} j) \geq c' \|j\|^2
$$
for all $j$ with angular momentum $l\geq 2$ and for all $j$ with angular momentum $l=1$ satisfying the additional constraint that $(f' Y_{1,m}, j) = 0$ for all $m=-1,0,1$. (Actually, the argument of \cite{Le} shows that the best constant $c'$ is achieved either for $l=1$ or for $l=2$.)

Since the function $r$ appearing in the definition of $H^{(1)}$ is radial, the operator $H^{(1)}$ commutes with angular momentum. Therefore, the previous discussion reduces the proof of Proposition \ref{pos1} to finding a lower bound on $(j, H^{(1)} j)$ for radial $j$ satisfying $(f,j)=0$. In other words, we have to exclude the possibility that $0$ is a degenerate eigenvalue of $H^{(1)}$ restricted to $l=0$.

As an aside, before completing the proof, we show that there is a radial function $R$ such that
$$
 L^{(1)} R = \mu f
$$
and
$$
(R,f) = \frac12 \,.
$$
This is also contained in \cite{Le}, but we include the short proof for the convenience of the reader. We define $f_\beta(x)=\beta^2 f(\beta x)$ and note that the Euler--Lagrange equation for $f$ implies the following equation for $f_\beta$,
$$
\left( -\Delta - 2 f_\beta^2 * \frac1{|x|} \right) f_\beta = \beta^2 \frac\mu2 f_\beta \,.
$$
By differentiating this equation at $\beta=1$ we obtain $L^{(1)} \frac{d}{d\beta}|_{\beta=1}f_\beta = \mu f$ and thus
$$
R= \left. \frac{d}{d\beta}\right|_{\beta=1} f_\beta = 2f + x\cdot\nabla f
$$
satisfies the desired equation and is obviously radial. The overlap $(R,f)$ is computed by differentiating the identity $\int f_\beta(x)^2 \,dx = \beta$ at $\beta=1$.

Having found the function $R$ we now conclude the proof by showing that $0$ is a simple eigenvalue of $ H^{(1)}$ restricted to radial functions. Thus, let $v$ be a radial function satisfying $ H^{(1)} v =0$ and $(f,v)=0$. We need to show that $v=0$. By the expression for $ H^{(1)}$,
$$
 L^{(1)} v + (r,v) f = 0 \,.
$$
We define $\tilde v= v + \mu^{-1} (r,v) R$ with $R$ as constructed above. Thus $ L^{(1)} \tilde v = 0$. Since $\tilde v$ is radial, Lenzmann's result \eqref{eq:kernel} implies that $\tilde v=0$. From the orthogonality condition of $v$ we infer that 
$$
0=(f,\tilde v) = \mu^{-1} (r,v) (f,R) \,.
$$
Thus, since $(f,R)\neq 0$, we have $(r,v)=0$, and therefore $v = \tilde v = 0$, as claimed. This concludes the proof of Proposition \ref{pos1}.
\end{proof}


\subsection{Proof of Proposition \ref{pos0}}

We consider the operator $H$ in the subspace $L^2_{symm}(\R^6)$ of all functions $j\in L^2(\R^6)$ satisfying $j(x,y)=j(y,x)$. We decompose this space as $L^2_{symm}(\R^6) = \mathcal H_0 \oplus \mathcal H_1$, where
$$
\mathcal H_0 = \{ \alpha f\otimes f + f\otimes g + g\otimes f:\ \alpha\in\C,\, (f,g)=0 \}
$$
and
$$
\mathcal H_1 = \spa\{f\}^\bot \otimes \spa\{f\}^\bot \,.
$$
Since $hf =0$ and since the rank one operators entering in the definition of $H$ only involve terms of the form $\tilde k \otimes f + f \otimes \tilde k$ (with $\tilde k$ being either $r$ or $f$), the operator $H$ leaves both subspaces invariant and we can study it separately on each subspace.

We observe that $ H$ coincides with $ L$ on $\mathcal H_1$ and that, moreover, the operator $X_{f\otimes f}$ vanishes on that space. Therefore, on $\mathcal H_1$ the operator $ L$ is just a sum of two one-body operators $h$. Since $f$ is positive, it is the ground state of $h$ and since the essential spectrum of $h$ starts at $-\mu/2>0$, $h$ has a gap $\delta>0$ above zero. We conclude that $L \geq 2\delta$ on $\mathcal H_1$.

We now turn to the space $\mathcal H_0$. More precisely, we are only interested in the space
$$
\mathcal H_0 \cap \ran P^\bot = \{ f\otimes g + g\otimes f:\ (f,g)=(f' Y_{1,m},g)=0 \} \,.
$$
For $j=f\otimes g+ g\otimes f$ from this space we have
$$
(j,Lj) = 2 (g,(h-4x_f)g) \,,
$$
and by Proposition \ref{pos1}, this is bounded from below by $2 C' \|g\|^2 = C' \|j\|^2$. This completes the proof of Proposition \ref{pos0}.
\qed



\appendix

\section{Existence of a rotation invariant optimizer}

Our goal in this section is to prove Proposition \ref{exopt}, that is, the existence of a rotation invariant optimizer. The heart of the proof is the following lemma.

\begin{lemma}\label{onee}
 Assume that $\psi_n\in H^1(\R^6)$ is a sequence of normalized, rotation invariant functions such that $\sqrt{\rho_{\psi_n}}$ has a weak limit $\sqrt\rho$ in $H^1(\R^3)$. Then
$$
\liminf_{n\to\infty} \mathcal E_0[\psi_n] \geq e \left( \int_{\R^3} \rho \,dx \right)^3 \,.
$$
\end{lemma}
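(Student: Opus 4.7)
The plan is to bound $\mathcal E_0[\psi_n]$ from below by a functional depending only on the density $\rho_n := \rho_{\psi_n}$, pass to the weak limit using the rotational symmetry of $\psi_n$, and then compare the resulting limit expression to the single-polaron energy $e$ via a scaling argument.

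First I would invoke the (two-particle) Hoffmann--Ostenhof inequality,
$$ \mathcal E_0[\psi_n] \geq \int_{\R^3} |\nabla\sqrt{\rho_n}|^2\,dx - D[\rho_n, \rho_n]. $$
Since $\sqrt{\rho_n}$ converges weakly in $H^1$, the sequence is uniformly bounded there, and the kinetic term is handled in the limit by weak lower semicontinuity of the Dirichlet integral. The subtle step is passing to the limit in the nonlocal Coulomb term --- and this is exactly where rotation invariance enters: the rotational symmetry of $\psi_n$ forces $\sqrt{\rho_n}$ to be radial, so Lions' compact embedding $H^1_{\mathrm{rad}}(\R^3)\hookrightarrow L^p(\R^3)$ for $p\in(2,6)$ upgrades the weak convergence to strong convergence in each such $L^p$. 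Taking $p=12/5$ gives $\rho_n\to\rho$ strongly in $L^{6/5}(\R^3)$, and the Hardy--Littlewood--Sobolev inequality then yields $D[\rho_n,\rho_n]\to D[\rho,\rho]$. Combining these facts:
$$ \liminf_{n\to\infty} \mathcal E_0[\psi_n] \geq \int_{\R^3} |\nabla\sqrt{\rho}|^2\,dx - D[\rho, \rho]. $$

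It remains to show that the right-hand side is at least $eN^3$, where $N := \int\rho$. The case $N=0$ is immediate. For $N>0$, write $\sqrt{\rho} = \sqrt{N}\,g$ with $\|g\|=1$, so that the right-hand side equals $NT - N^2 D$ with $T = \int|\nabla g|^2$ and $D = D[g^2, g^2]$. Under the dilation $g\mapsto g_\lambda(x) := \lambda^{3/2} g(\lambda x)$ this becomes $N\lambda^2 T - N^2\lambda D$, whose minimum over $\lambda>0$ is $-N^3 D^2/(4T)$; hence $NT - N^2 D \geq -N^3 D^2/(4T)$. Applying the same dilation to the single-polaron functional in \eqref{eq:energysingle} yields the identity $e = -\sup_{\|h\|=1} D[h^2,h^2]^2/(4\int|\nabla h|^2)$, equivalently $-D^2/(4T)\geq e$. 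Multiplying by $N^3$ gives the desired inequality.

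The main obstacle is the continuity of the nonlocal Coulomb energy under weak convergence, which generically fails. Rotational invariance is precisely the feature that rescues it via the radial compact embedding --- and this is the very mechanism behind the threshold $e$ (rather than $2e$) appearing in Proposition \ref{exopt}: a rotationally symmetric wave function cannot split into two well-separated one-polaron clusters, so any mass that `escapes to infinity' does so in a spherical shell with negligible Coulomb self-interaction, making the bound $eN^3$ informative precisely when no mass is lost.
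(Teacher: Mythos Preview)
Your argument is correct and reaches the same conclusion as the paper, but the mechanism you use to pass to the limit in the Coulomb term is different from the one in the paper and somewhat more direct.

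The paper also begins with the Hoffmann--Ostenhof inequality, but instead of bounding $D[\rho_n,\rho_n]$ directly it introduces the auxiliary potential $\Phi_n=\rho_n*|x|^{-1}$ and works with the representation $\mathcal E_0[\psi_n,\Phi_n]$ from \eqref{eq:energypot}. Rotational symmetry is then exploited through Newton's theorem, which gives the pointwise bound $0\le\Phi_n(x)\le 2|x|^{-1}$; this uniform Coulomb tail makes the region $\{|x|>R\}$ harmless, while on $\{|x|\le R\}$ one uses only local Rellich--Kondrashov compactness of $\sqrt{\rho_n}$. You instead invoke the Strauss/Lions radial compact embedding $H^1_{\mathrm{rad}}\hookrightarrow L^{12/5}$ to get $\rho_n\to\rho$ in $L^{6/5}$ globally, and then Hardy--Littlewood--Sobolev gives $D[\rho_n,\rho_n]\to D[\rho,\rho]$ in one stroke. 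Your route avoids the auxiliary field $\Phi$ and the near/far splitting; the paper's route, on the other hand, stays within the $\mathcal E_U[\psi,\Phi]$ framework already set up in Step~2 of Subsection~\ref{sec:prepar} and uses only local compactness. The final scaling step---reducing $\int|\nabla\sqrt\rho|^2 - D[\rho,\rho]\ge e(\int\rho)^3$ to the variational characterization of $e$---is the same in both arguments, though you spell it out more explicitly.
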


\begin{proof}
  As in Step 2 in Subsection \ref{sec:prepar} we rewrite the energy in terms of the potential $\Phi_n := \rho_n *|x|^{-1}$ generated by $\rho_n := \rho_{\psi_n}$. By the Hoffmann--Ostenhof inequality \cite{HoHo} we obtain
\begin{align}\label{eq:enlower}
 \mathcal E_0[\psi_n] & = \iint_{\R^3\times\R^3} \left( |\nabla_x\psi_n|^2 +|\nabla_y\psi_n|^2 -\Phi_n(x)|\psi_n|^2 -\Phi_n(y)|\psi_n|^2 \right) \,dx\,dy \notag \\
& \qquad + \frac1{8\pi} \int_{\R^3} |\nabla\Phi_n|^2 \,dx \notag \\
& \geq \int_{\R^3} \left( |\nabla \sqrt{\rho_n}|^2 - \Phi_n \rho_n \right) \,dx 
+ \frac1{8\pi} \int_{\R^3} |\nabla\Phi_n|^2 \,dx \,.
\end{align}
Of course, we may pass to a subsequence and assume that $\mathcal E_0[\psi_n]$ has a finite limit. Thus we infer that the sequence $\Phi_n$ is bounded in $\dot H^1(\R^3)$ and, after passing to another subsequence if necessary, we may assume it has a weak limit $\Phi$ in $\dot H^1(\R^3)$. Below we shall argue that
\begin{align}
 \label{eq:weakstrong}
\lim_{n\to\infty} \int_{\R^3} \left( \Phi_n \rho_n -\Phi \rho \right) \,dx = 0 \,.
\end{align}
Once this is proved, we conclude from \eqref{eq:enlower} and the lower semi-continuity of the terms $\int |\nabla\sqrt{\rho_n}|^2 \,dx$ and $\int |\nabla\Phi_n|^2 \,dx$ that
$$
\liminf_{n\to\infty} \mathcal E_0[\psi_n] 
\geq \int_{\R^3} \left( |\nabla \sqrt{\rho}|^2 - \Phi \rho \right) \,dx + \frac1{8\pi} \int_{\R^3} |\nabla\Phi|^2 \,dx \,.
$$
By the same argument as in \eqref{eq:delin} and by the definition of the single-polaron energy we find that
$$
\int_{\R^3} \left( |\nabla \sqrt{\rho}|^2 - \Phi \rho \right) \,dx + \frac1{8\pi} \int_{\R^3} |\nabla\Phi|^2 \,dx
\geq \int_{\R^3} |\nabla \sqrt{\rho}|^2 \,dx - D[\rho,\rho] \geq e \left( \int_{\R^3} \rho \,dx \right)^3 \,.
$$
The last inequality follows by scaling $\rho$. This is the claimed lower bound.

We are left with proving \eqref{eq:weakstrong}. First, by Sobolev embeddings $\Phi_n$ converges to $\Phi$ weakly in $L^6$ and we have $\rho\in L^1\cap L^3\subset L^{6/5}$. Thus, we only need to prove
$$
\lim_{n\to\infty} \int_{\R^3}  \Phi_n \left(\rho_n - \rho \right) \,dx = 0 \,.
$$
At this point we use the spherical symmetry of $\rho_n$ and $\Phi_n$. Newton's theorem implies that $0\leq \Phi_n(x) \leq 2|x|^{-1}$. Thus, for any $R>0$,\
$$
\left| \int_{|x|>R}  \Phi_n \left(\rho_n - \rho \right) \,dx \right| \leq 8 R^{-1} \,,
$$
which can be made arbitrarily small, uniformly in $n$, by choosing $R$ large. On the other hand,
$$
\left| \int_{|x|\leq R}  \Phi_n \left(\rho_n - \rho \right) \,dx \right| 
\leq \left( \int_{\R^3}  \Phi_n^2 \left(\sqrt{\rho_n} + \sqrt{\rho} \right)^2 \,dx \right)^{1/2} \left( \int_{|x|\leq R}  \left(\sqrt{\rho_n} - \sqrt{\rho} \right)^2 \,dx \right)^{1/2} \,.
$$
By Rellich--Kondrashov the second term on the right side tends to zero as $n\to\infty$ for every fixed $R>0$. Moreover, the first term is bounded uniformly in $n$, since
$$
\int_{\R^3}  \Phi_n^2 \left(\sqrt{\rho_n} + \sqrt{\rho} \right)^2 \,dx \leq 
\left( \int_{\R^3}  \Phi_n^6 \,dx \right)^{1/3} 
\left( \int_{\R^3}  \left(\sqrt{\rho_n} + \sqrt{\rho} \right)^3 \,dx \right)^{2/3}  \,.
$$
(The fact that the $\rho_n$-term is uniformly bounded follows from the fact that $\rho_n$ is uniformly bounded in $L^1\cap L^3$ by Sobolev inequalities.) This concludes the proof of \eqref{eq:weakstrong}, and therefore the lemma is proven.
\end{proof}

The following lemma is well known. We include the proof for the convenience of the reader. An instructive example to keep in mind is where $\psi_n(x,y) = f(x)g_n(y) + g_n(x)f(y)$ where $g_n$ and $f$ are $H^1(\R^3)$ functions with disjoint support and where $g_n$ converges weakly to zero in $H^1(\R^3)$.

\begin{lemma}\label{mass}
 Assume that $\psi_n$ is normalized and converges weakly to zero in $H^1(\R^6)$ and assume that $\sqrt{\rho_{\psi_n}}$ converges weakly in $H^1(\R^3)$ to some $\sqrt\rho$. Then $\int_{\R^3} \rho\,dx \leq 1$.
\end{lemma}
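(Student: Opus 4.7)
The plan is to combine two applications of Rellich--Kondrashov compactness, one in $\R^3$ to pass to the limit in the local mass of $\rho_n := \rho_{\psi_n}$ and one in $\R^6$ to kill the contribution from the region where both variables of $\psi_n$ lie in a common bounded set.

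First I would fix $R > 0$. Since $\sqrt{\rho_n}\rightharpoonup\sqrt\rho$ in $H^1(\R^3)$, the compact embedding $H^1(B_R)\hookrightarrow L^2(B_R)$ yields $\sqrt{\rho_n}\to\sqrt\rho$ strongly in $L^2(B_R)$, and hence
$$\int_{B_R}\rho\,dx = \lim_{n\to\infty}\int_{B_R}\rho_n\,dx.$$
Next I would rewrite $\int_{B_R}\rho_n\,dx$ directly in terms of $\psi_n$ by unfolding the definition of $\rho_n$ and relabelling variables in the second summand, which gives
$$\int_{B_R}\rho_n(x)\,dx = \iint_{\R^3\times\R^3}\bigl(\mathbf{1}_{B_R}(x)+\mathbf{1}_{B_R}(y)\bigr)|\psi_n(x,y)|^2\,dx\,dy.$$
Since $\mathbf{1}_{B_R}(x)+\mathbf{1}_{B_R}(y)$ equals $2$ on $B_R\times B_R$ and is at most $1$ elsewhere, the pointwise bound $\mathbf{1}_{B_R}(x)+\mathbf{1}_{B_R}(y)\leq 2\,\mathbf{1}_{B_R\times B_R}(x,y)+1$ then gives
$$\int_{B_R}\rho_n\,dx \leq 2\iint_{B_R\times B_R}|\psi_n|^2\,dx\,dy + \|\psi_n\|^2 = 2\iint_{B_R\times B_R}|\psi_n|^2\,dx\,dy + 1.$$
Since $\psi_n\rightharpoonup 0$ in $H^1(\R^6)$, a second application of Rellich--Kondrashov, now on the compact set $\overline{B_R\times B_R}\subset\R^6$, yields $\psi_n\to 0$ strongly in $L^2(B_R\times B_R)$, so the first term on the right-hand side tends to zero as $n\to\infty$. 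Therefore $\int_{B_R}\rho\,dx \leq 1$ for every $R>0$, and monotone convergence as $R\to\infty$ yields $\int_{\R^3}\rho\,dx \leq 1$.

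No step of this argument is truly an obstacle; the whole point is to spot the right decomposition. The splitting of $\mathbf{1}_{B_R}(x)+\mathbf{1}_{B_R}(y)$ cleanly isolates the ``both particles localized in $B_R$'' contribution (which vanishes by compactness because $\psi_n\rightharpoonup 0$) from the ``at least one particle outside $B_R$'' contribution (which is harmlessly bounded by $\|\psi_n\|^2=1$). This is precisely the mechanism by which the total mass $\int_{\R^3}\rho_n\,dx = 2$ is permitted to drop to at most $1$ in the weak limit, in accordance with the two-lobed example discussed immediately before the lemma.
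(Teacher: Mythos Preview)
Your proof is correct and follows essentially the same approach as the paper: both localize to $B_R$, use Rellich--Kondrashov in $\R^3$ to pass to the limit in $\int_{B_R}\rho_n$, rewrite this integral as $\iint(\mathbf{1}_{B_R}(x)+\mathbf{1}_{B_R}(y))|\psi_n|^2$, split off the $B_R\times B_R$ contribution (which vanishes by Rellich--Kondrashov in $\R^6$), and bound the remainder by $\|\psi_n\|^2=1$. The only cosmetic difference is that the paper uses the inclusion--exclusion identity $\mathbf{1}_{B_R}(x)+\mathbf{1}_{B_R}(y)=1-\mathbf{1}_{B_R^c\times B_R^c}+\mathbf{1}_{B_R\times B_R}$ and then drops the negative term, whereas you use the slightly cruder pointwise bound $\mathbf{1}_{B_R}(x)+\mathbf{1}_{B_R}(y)\leq 1+2\,\mathbf{1}_{B_R\times B_R}$; this changes the constant in front of the vanishing term from $1$ to $2$, which is immaterial.
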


\begin{proof}
  For any $R>0$ we write
\begin{align*}
\int_{|x|<R} \rho_{\psi_n}(x) \,dx 
& = \iint_{|x|<R} |\psi_n(x,y)|^2 \,dx\,dy + \iint_{|y|<R} |\psi_n(x,y)|^2 \,dx\,dy \\
& = 1 - \iint_{|x|>R,|y|>R} |\psi_n(x,y)|^2 \,dx\,dy + \iint_{|x|<R, |y|<R} |\psi_n(x,y)|^2 \,dx\,dy \\
& \leq 1 + \iint_{|x|<R, |y|<R} |\psi_n(x,y)|^2 \,dx\,dy \,.
\end{align*}
Since $\psi_n$ converges weakly to zero in $H^1(\R^6)$, the Rellich--Kondrashov theorem implies that the last double integral on the right side tends to zero as $n\to\infty$. Thus, again by the Rellich--Kondrashov theorem now applied to $\sqrt{\rho_{\psi_n}}$,
$$
\int_{|x|<R} \rho(x) \,dx = \lim_{n\to\infty} \int_{|x|<R} \rho_{\psi_n}(x) \,dx \leq 1 \,.
$$
Since this is true for any $R>0$, we obtain the assertion.
\end{proof}

\begin{proof}[Proof of Proposition \ref{exopt}]
Let $\psi_n$ be a minimizing sequence for $e_U^{symm}$ with $e_U^{symm}<e$. Since $\mathcal E_U[\psi_n]$ is bounded, $\psi_n$ is bounded in $H^1$ and, after passing to a subsequence, it converges weakly in $H^1(\R^6)$ to a function $\psi$. We decompose $\psi_n = \psi + \tilde\psi_n$, where now $\tilde\psi_n$ converges weakly to zero in $H^1(\R^6)$, and note that
\begin{equation}
 \label{eq:split3}
1=\|\psi_n\|^2 = \|\psi\|^2 + \|\widetilde\psi_n\|^2 + o(1) \,.
\end{equation}
We claim that asymptotically, the energy of the minimizing sequence splits as follows,
\begin{equation}
 \label{eq:split}
\mathcal E_U[\psi_n] = \mathcal E_U[\psi,\Phi_n] + \mathcal E_U[\widetilde\psi_n,\Phi_n] + o(1) \,.
\end{equation}
Here $\Phi_n=\rho_{\psi_n}*|x|^{-1}$, and we recall that the notation $\mathcal E_U[\psi,\Phi]$ was introduced in \eqref{eq:energypot}. The proof of \eqref{eq:split} relies on the weak convergence in $H^1$ for the positive terms in the energy functional and on the fact that
$$
\iint_{\R^3\times\R^3} \Phi_n(x) \widetilde\psi_n(x,y) \psi(x,y) \,dx\,dy = o(1) \,.
$$
The proof of the latter relation also uses the weak convergence in $H^1$, together with the fact that $\Phi_n$ is bounded in $L^\infty$. The details are as in \cite[Eq. (5.13)]{FLS} and are omitted. 

It follows from \eqref{eq:split}, together with \eqref{eq:delin} and \eqref{eq:split3} that
\begin{equation}
 \label{eq:split2}
\mathcal E_U[\psi_n] \geq \|\psi\|^2 \mathcal E_U[\|\psi\|^{-1}\psi] + (1-\|\psi\|^2) \mathcal E_U[v_n] + o(1) \,,
\end{equation}
where $v_n = \|\widetilde\psi_n\|^{-1} \widetilde\psi_n$. Here we interpret $\mathcal E_U[\|\psi\|^{-1}\psi]$ as zero if $\psi\equiv0$, and similarly for $\mathcal E_U[v_n]$.

Given \eqref{eq:split2}, it is easy to deduce that $\|\psi\|=1$. Indeed, we argue by contradiction and assume that $\|\psi\|<1$, which is the same, by \eqref{eq:split3}, as assuming that $\|\tilde\psi_n\|$ has a non-zero limit. Hence $v_n$ converges weakly to zero in $H^1(\R^6)$. By the Hoffmann-Ostenhof inequality \cite{HoHo} the square roots of the corresponding densities $\sigma_n = \rho_{v_n}$ are bounded in $H^1(\R^3)$ and hence, after passing to a subsequence, have a weak limit $\sqrt\sigma$ in $H^1(\R^3)$. Since all $v_n$ are rotation invariant, we learn from Lemmas \ref{onee} and \ref{mass} that
$$
\liminf_{n\to\infty} \mathcal E_U[v_n] \geq e \left( \int_{\R^3} \sigma \,dx \right)^3 \geq e \,.
$$
On the other hand, we trivially have $\mathcal E_U[\|\psi\|^{-1}\psi] \geq e_U^{symm}$. (This also holds if $\psi\equiv 0$ with our convention.) Thus \eqref{eq:split2} implies that
$$
e_U^{symm} = \lim_{n\to\infty} \mathcal E_U[\psi_n] \geq \|\psi\|^2 e_U^{symm} + (1-\|\psi\|^2) e \,.
$$
Since $\|\psi\|<1$, this contradicts our assumption that $e_U^{symm}<e$.

Thus we have shown that $\|\psi\|=1$, and now \eqref{eq:split2} implies that $\mathcal E_U[\psi_n] \geq \mathcal E_U[\psi] + o(1)$, from which we deduce that $e_U^{symm} \geq \mathcal E_U[\psi]$, that is, $\psi$ is a minimizer. This completes the proof of Proposition \ref{exopt}.
\end{proof}


\section{On the rotation invariant optimization problem}

In the next two propositions we present some interesting facts about the rotation invariant minimizers. This appendix is not needed in the rest of the paper.

\begin{proposition}\label{us}
 If $U<1$, then $e_U^{symm} < e$. On the other hand, if \eqref{eq:el} has a rotation-invariant solution $0\not\equiv \phi\in H^1(\R^6)$, then $U< 4$. In particular, $e_U^{symm} = e$ for $U\geq 4$.
\end{proposition}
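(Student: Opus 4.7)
\textbf{Proof plan for Proposition~\ref{us}.}

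I would handle the three assertions separately. For $e_U^{symm}<e$ when $U<1$, the plan is to exhibit an explicit rotation-invariant trial function with energy strictly below $e$. Take $\psi_\lambda(x,y)=f_\lambda(x)f_\lambda(y)$ where $f$ is the unique radial single-polaron minimizer from Step~2 and $f_\lambda(x)=\lambda^{3/2}f(\lambda x)$. Both factors are radial and centered at the origin, so $\psi_\lambda$ is rotation-invariant with unit $L^2$-norm. Using $\rho_{\psi_\lambda}=2f_\lambda^2$ and the single-polaron virial relations $\int|\nabla f|^2=-e$ and $D[f^2,f^2]=-2e$ gives
\[
\mathcal E_U[\psi_\lambda] = 2\lambda^2\int|\nabla f|^2 - 2(2-U)\lambda D[f^2,f^2] = -2e\lambda^2 + 4(2-U)e\lambda .
\]
For $U<2$ this quadratic in $\lambda$ is minimized at $\lambda_*=2-U$ with value $2e(2-U)^2$, which lies strictly below $e$ whenever $(2-U)^2>1/2$, i.e.\ $U<2-1/\sqrt 2\approx 1.29$; this covers the range $U<1$.

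For the implication ``\eqref{eq:el} has a nonzero rotation-invariant $H^1$-solution $\phi \Rightarrow U<4$,'' my plan rests on two ingredients. First, the virial identity obtained from the norm-preserving scaling $\phi\mapsto \lambda^3\phi(\lambda x,\lambda y)$ yields at the critical point
\[
2K + UV_{ee} = D[\rho_\phi,\rho_\phi], \qquad K=\iint(|\nabla_x\phi|^2+|\nabla_y\phi|^2)\,dx\,dy, \ V_{ee}=\iint \frac{|\phi|^2}{|x-y|}\,dx\,dy ,
\]
so $K>0$ (from $\phi\not\equiv 0$) gives $UV_{ee}<D[\rho_\phi,\rho_\phi]$. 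Second, rotation invariance makes $\rho_\phi$ radial, so Newton's theorem gives $(\rho_\phi*|x|^{-1})(x)\leq 2/|x|$ pointwise. The goal is then to combine these with \eqref{eq:el} to obtain a reverse bound of the form $V_{ee}\geq \frac{1}{4} D[\rho_\phi,\rho_\phi]$; together with the virial bound this will force $D[\rho_\phi,\rho_\phi]/4\leq V_{ee}<D[\rho_\phi,\rho_\phi]/U$, yielding $U<4$.

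The hard part will be establishing the reverse inequality $V_{ee}\geq \frac{1}{4}D[\rho_\phi,\rho_\phi]$. It fails for arbitrary rotation-invariant symmetric functions --- a ``two-shell'' construction (one radial piece near the origin, another on a thin spherical shell of large radius $R$, with disjoint supports) makes $V_{ee}/D[\rho_\phi,\rho_\phi]$ arbitrarily small as $R\to\infty$ --- so the argument must genuinely exploit the self-consistency imposed by \eqref{eq:el}. The natural strategy is to pair the Euler--Lagrange equation with a weighted test function (for example $(\rho_\phi*|x|^{-1})(x)\phi$ or $\rho_\phi(x)\phi$) and to close the estimate using Newton's theorem together with the virial identity above.

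Finally, the ``in particular'' assertion follows at once: if $U\geq 4$ and $e_U^{symm}<e$, then Proposition~\ref{exopt} produces a rotation-invariant minimizer, which is a nonzero $H^1$-solution of \eqref{eq:el} and thus forces $U<4$ by the previous step, a contradiction. Hence $e_U^{symm}\geq e$ for $U\geq 4$, which combined with the universal trial upper bound $e_U^{symm}\leq e$ gives equality.
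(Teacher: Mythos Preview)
Your trial function for the first assertion is different from the paper's and works cleanly: the paper uses a Zhislin-type two-shell state $\psi=[f_R\otimes\eta_R+\eta_R\otimes f_R]/\sqrt 2$ with $\eta_R$ supported on a far spherical shell, and obtains energy $e+o(1/R)+(U-1)\int|\eta_R|^2/|y|\,dy$ via Newton's theorem. Your scaled product $f_\lambda\otimes f_\lambda$ is more elementary and even yields the sharper threshold $U<2-1/\sqrt 2$; either route suffices for $U<1$. Your ``in particular'' paragraph is also fine and is exactly how the paper closes.

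The genuine gap is in the second assertion. You correctly identify that the inequality $V_{ee}\geq\tfrac14 D[\rho_\phi,\rho_\phi]$ fails for generic rotation-invariant states, and you do not actually prove it for solutions of \eqref{eq:el}; ``pair the equation with $(\rho_\phi*|x|^{-1})\phi$ or $\rho_\phi\phi$'' is a hope, not an argument, and in fact those weights are not the ones that close. The paper does not go through virial plus a reverse Coulomb inequality at all: it invokes Lieb's $N<2Z+1$ theorem. The mechanism there is to multiply the Euler--Lagrange equation by $|x|\phi$ (and by $|y|\phi$) and sum. One uses (i) the operator inequality $H_\Phi-\mu\geq 0$, which holds because $\phi$ is the ground state of the two-body Schr\"odinger operator $H_\Phi=-\Delta_x-\Delta_y+U|x-y|^{-1}-\Phi(x)-\Phi(y)$, to control the kinetic contribution; (ii) the triangle inequality $|x|+|y|>|x-y|$, which turns the repulsion into a strict lower bound $U$; and (iii) Newton's theorem for the radial $\rho_\phi$, giving $|x|\,\Phi(x)\leq\int\rho_\phi=2$, to bound the attraction. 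This yields $2<4/U+1$, hence $U<4$. The correct weight is the bare $|x_j|$, and the crucial input you are missing is not virial but the ground-state positivity $H_\Phi-\mu\geq 0$.
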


\begin{proof}
 The first part of the proposition follows by a simple variational computation in the manner of Zhislin's theorem; see, e.g., \cite[Thm. 12.2]{LiSe}. We write $\psi(x,y) = [f_R(x) \eta_R(y) + \eta_R(x) f_R(y)]/\sqrt 2$, where $f_R$ is the one-polaron function smoothly cut off at some large radius $R$, appropriately normalized (thereby making an error in the energy of the order $o(1/R)$) and where $\eta_R$ is a normalized, radial function with support in the shell $10R \leq |x|\leq 11 R$. Thus the total energy is, by Newton's theorem 
$$
e+o(1/R) + (U-1)\int_{\R^3} \frac{|\eta_R(y)|^2}{|y|}\,dy \leq e+o(1/R) + \frac{U-1}{11 R} \,.
$$
For sufficiently large $R$, this number is less then $e$.

The second part is an adaptation of the $N<2Z+1$ theorem in \cite[Sec. VI.a]{Li0}. In the present situation $N=2$ and the effective $Z=\int \rho_\phi \,dx/U =2/U$.
\end{proof}

Our second result in this section concerns the correlation of the particles in the optimizing rotation invariant state. We state this in form of a rearrangement inequality, where the rearrangement is defined as follows. We identify a rotation invariant function $\psi(x,y)$ with a function $u(r,s,t)$, where $r=|x|$, $s=|y|$ and $t=  x\cdot y /|x| |y|$. For fixed $r,s\geq 0$, we denote by $u^*(r,s,\cdot)$ the unique, non-increasing function on $[-1,1]$ which is equi-measurable with $|u(r,s,\cdot)|$ (in the sense of Lebesgue measure on $[-1,1]$). If $u(r,s,t)$ came from a function $\psi(x,y)$, we also use the notation $\psi^*(x,y)$ for $u^*(r,s,t)$. Then $|\psi|$ and $\psi^*$ are equi-measurable and, in particular, $\|\psi\|= \|\psi^*\|$.

\begin{proposition}\label{rearr}
 For any $U>0$ and any rotation invariant $\psi\in H^1(\R^6)$ with $\|\psi\|=1$ one has $\mathcal E_U[\psi^*] < \mathcal E_U[\psi]$ unless $\psi=\psi^*$ a.e.
\end{proposition}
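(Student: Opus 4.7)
The plan is to pass to coordinates $r = |x|$, $s = |y|$, $t = x \cdot y/(|x||y|)$, write $\psi = u(r,s,t)$, and analyze the three pieces of $\mathcal E_U$ separately under the one-dimensional rearrangement of $u(r,s,\cdot)$ in $t$. Since this rearrangement is equi-measurable on every slice, one immediately has $\rho_{\psi^*} = \rho_\psi$, so the attractive term $-2\alpha D[\rho_\psi,\rho_\psi]$ is preserved.

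For the Coulomb repulsion the decisive observation is that in these coordinates the kernel $|x-y|^{-1} = (r^2+s^2-2rst)^{-1/2}$ is a \emph{strictly} increasing function of $t\in[-1,1]$ whenever $r,s>0$. The one-variable Hardy--Littlewood rearrangement inequality applied fiber-by-fiber therefore gives
\[
\int_{-1}^1 \frac{|u(r,s,t)|^2}{\sqrt{r^2+s^2-2rst}}\,dt \ \geq\ \int_{-1}^1 \frac{|u^*(r,s,t)|^2}{\sqrt{r^2+s^2-2rst}}\,dt,
\]
strictly on every slice where $|u(r,s,\cdot)|$ is not already non-increasing in $t$; integrating against $8\pi^2 r^2 s^2\,dr\,ds$ then shows that the repulsion drops strictly as soon as $|\psi|\neq \psi^*$ on a set of positive measure.

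For the kinetic energy a direct computation using $|\nabla_{\omega_y}(\omega_x\cdot\omega_y)|^2 = 1-t^2$ on $\Sph^2$, together with integration of the trivial $SO(3)$-redundant angles, gives
\[
\iint \bigl(|\nabla_x\psi|^2 + |\nabla_y\psi|^2\bigr) dx\,dy \ =\ 8\pi^2\!\int r^2 s^2 \Bigl[(\partial_r u)^2 + (\partial_s u)^2 + \bigl(r^{-2}+s^{-2}\bigr)(1-t^2)(\partial_t u)^2\Bigr] dr\,ds\,dt.
\]
The $(1-t^2)(\partial_t u)^2$ contribution is (up to the factor $2\pi$) the Dirichlet energy on $\Sph^2$ of $u(r,s,\cdot)$ viewed as an axially symmetric function of $\omega_y$, and its non-increase under cap-decreasing rearrangement is the classical Polya--Szego inequality on the sphere. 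For the two transverse contributions $\int(\partial_r u)^2\,dt$ and $\int(\partial_s u)^2\,dt$, I would use two-point polarization in the $t$-variable: for each $t_0\in(-1,1)$ the polarization $u_{t_0}$ that places the larger of the paired values $u(r,s,t)$, $u(r,s,2t_0-t)$ at the smaller $t$ merely permutes function values, so a pointwise inspection gives
\[
(\partial_r u_{t_0})^2(r,s,t) + (\partial_r u_{t_0})^2(r,s,2t_0-t) \ =\ (\partial_r u)^2(r,s,t) + (\partial_r u)^2(r,s,2t_0-t),
\]
whence $\int(\partial_r u_{t_0})^2\,dt = \int(\partial_r u)^2\,dt$ exactly at each $(r,s)$ (and likewise for $\partial_s$). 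Iterating polarizations along a countable dense family of $t_0$ in $(-1,1)$ produces a sequence converging to $u^*$ in $L^2$, and weak lower semicontinuity of the Dirichlet integral then yields the required non-increase of the transverse terms.

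Combining the three ingredients gives $\mathcal E_U[\psi^*]\leq \mathcal E_U[\psi]$, and for $U>0$ the strict inequality follows from the repulsion step whenever $|\psi|\neq \psi^*$; a routine application of Kato's inequality disposes of a non-trivial phase that might distinguish $\psi$ from $\psi^*$, yielding strict inequality whenever $\psi\neq \psi^*$ a.e. The main obstacle is the kinetic step: the $t$-derivative piece is immediate from spherical Polya--Szego, whereas the transverse pieces $\int(\partial_r u)^2\,dt$, $\int(\partial_s u)^2\,dt$ must be handled by the polarization-and-limit argument above, which is the only place where genuine care is needed.
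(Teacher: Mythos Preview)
Your argument is correct and structurally identical to the paper's: the $D$-term is unchanged since $\rho_{\psi^*}=\rho_\psi$; the repulsion strictly drops by the one-variable Hardy--Littlewood inequality, using that $(r^2+s^2-2rst)^{-1/2}$ is strictly increasing in $t$; and the kinetic energy in $(r,s,t)$ coordinates splits into a $\partial_t$ piece handled by spherical P\'olya--Szeg\H{o} and transverse $\partial_r$, $\partial_s$ pieces. The only difference is in those transverse pieces. You use two-point polarization in $t$ (each polarization preserving $\int(\partial_r u)^2\,dt$ exactly) and then pass to the limit by weak lower semicontinuity. The paper instead approximates $\partial_r u$ by the difference quotient $h^{-1}\bigl(u(r+h,s,t)-u(r,s,t)\bigr)$, expands the square, and observes that the cross term $\int_{-1}^1 u(r+h,s,t)\,u(r,s,t)\,dt$ can only increase under monotone rearrangement by the elementary Hardy--Littlewood inequality; sending $h\to 0$ gives $\int(\partial_r u)^2\,dt\ge\int(\partial_r u^*)^2\,dt$ in one stroke. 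Your route is valid but imports more machinery (convergence of iterated polarizations to the rearrangement, a weak-compactness step) where the paper gets by with a difference quotient and the very same rearrangement inequality already used for the Coulomb term.
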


In particular, if $\psi$ is an optimizer for $e^{symm}_U$, then for any fixed $|x|$ and $|y|$, $\psi(x,y)$ is a non-increasing function of $x\cdot y /|x| |y|$.

\begin{proof}
 Since $\rho_\psi=\rho_{\psi^*}$ because of equi-measurability, the $D[\rho,\rho]$ term in the energy functional does not change when $\psi$ is replaced by $\psi^*$. Moreover, the repulsion term improves (unless $\psi=\psi^*$) since, for fixed $r,s> 0$,
$$
 \int_{-1}^1 \frac{|u(r,s,t)|^2}{(r^2-2rst+ s^2)^{1/2}} \,dt 
> \int_{-1}^1 \frac{|u^*(r,s,t)|^2}{(r^2-2rst+ s^2)^{1/2}} \,dt
$$
unless $u(r,s,t)=u^*(r,s,t)$. This is a simple rearrangement inequality proved as in \cite[Thm. 3.4]{LiLo}. It uses the fact that $(r^2-2rst+ s^2)^{-1/2}$ is an increasing function of $t$.

For the kinetic energy we compute
$$
|\nabla_x\psi|^2 = \left|\frac{\partial u}{\partial r}\right|^2 + \frac1{r^2} (1-t^2) \left|\frac{\partial u}{\partial t}\right|^2
$$
and similarly for $\nabla_y\psi$, so that
\begin{align*}
& \iint_{\R^3\times\R^3} \left( |\nabla_x\psi|^2 +|\nabla_y\psi|^2 \right) \,dx\,dy \\
& \quad =8\pi^2 \int_0^\infty \int_0^\infty \int_{-1}^1 \left( \left|\frac{\partial u}{\partial r}\right|^2 + \left|\frac{\partial u}{\partial s}\right|^2 + \left( \frac1{r^2}+\frac1{s^2}\right) (1-t^2) \left|\frac{\partial u}{\partial t}\right|^2 \right) r^2 s^2 \,dt\,ds\,dr \,.
\end{align*}
We now argue that for any fixed $r$ and $s$, one has
$$
\int_{-1}^1 \left|\frac{\partial u}{\partial r}\right|^2 \,dt \geq \int_{-1}^1 \left|\frac{\partial u^*}{\partial r}\right|^2 \,dt \,,
$$
and similarly for $\frac{\partial u}{\partial s}$, and
$$
\int_{-1}^1 (1-t^2) \left|\frac{\partial u}{\partial t}\right|^2 \,dt
\geq \int_{-1}^1 (1-t^2) \left|\frac{\partial u^*}{\partial t}\right|^2 \,dt \,.
$$
Of course, these two inequalities will complete the proof of the proposition. For the proof of the first inequality we approximate $\frac{\partial u}{\partial r}(r,s,t)$ by $h^{-1}(u(r+h,s,t)-u(r,s,t))$. Then the $t$-integrals of the terms $|u(r+h,s,t)|^2$ and $|u(r,s,t)|^2$ do not change under symmetrization, whereas
$$
\re \int_{-1}^1 u(r+h,s,t)u(r,s,t) \,dt
\leq \int_{-1}^1 u^*(r+h,s,t)u^*(r,s,t)\,dt
$$
by a simple rearrangement inequality. As $h\to 0$ we obtain the first one of two claims. For the proof of the second inequality we consider the function $g$ on $\Sph^2$, given in spherical coordinates $(\phi,\theta)$ by $g(\phi,\theta) = u(r,s,\cos\theta)$. Then
$$
\int_{\Sph^2} |\nabla_\omega g|^2 \,d\omega = 2\pi \int_{-1}^1 (1-t^2) \left|\frac{\partial u}{\partial t}\right|^2 \,dt \,.
$$
Since passing from $u(r,s,\cdot)$ to $u^*(r,s,\cdot)$ corresponds to the usual symmetrization of $g$ on the sphere, the latter inequality follows from standard symmetrization results; see, e.g., \cite{Ba}. This finishes the proof of Proposition \ref{rearr}.
\end{proof}


\bibliographystyle{amsalpha}

\end{document}